\journal{Journal of Mathematical Economics}
\newtheorem{proposition}{Proposition}
\newtheorem{corollary}{Corollary}
\newtheorem{lemma}{Lemma}
\theoremstyle{definition}
\newtheorem{definition}{Definition}
\patchcmd{\emailauthor}{(#2)}{}{}{}
\patchcmd{\urlauthor}{(#2)}{}{}{}
\newcommand\encircle[1]{%
	\tikz[baseline=(X.base)] 
	\node (X) [draw, shape=circle, inner sep=0] {\strut #1};}
\newcommand\ensquare[1]{%
	\tikz[baseline=(X.base)] 
	\node (X) [draw, shape=diamond, inner sep=0cm] {\strut #1};}
\newcommand{\EE}{\mathbb{E}}
\newcommand{\MM}{\mathcal{M}}
\newcommand{\rk}{\text{rk}} 
\renewcommand{\tablename}{Example} 
\begin{document}
	
	\begin{frontmatter}
		\title{On the Integration of Shapley--Scarf Markets
			\footnote{{\it Author contributions}: J. Ortega designed the research, conducted the simulations, and replied to reviewers. R. Kumar, K. Manocha and J. Ortega conducted the mathematical analysis. K. Manocha and J. Ortega wrote the paper.}
		}
		\author[add2]{Rajnish Kumar}
		\author[add1]{Kriti Manocha}
		\author[add2]{Josu\'e Ortega}
		\ead{j.ortega@qub.ac.uk (Corresponding author)}

		\address[add2]{Queen's Management School, Queen's University Belfast, UK.}		
		\address[add1]{Indian Statistical Institute, Delhi, India.}				
		\date{\today}
		
		\begin{abstract}
			We study the welfare consequences of merging Shapley--Scarf markets. Market integration can lead to large welfare losses and make the vast majority of agents worse-off, but is on average welfare-enhancing and makes all agents better off ex-ante. The number of agents harmed by integration is a minority when all markets are small or agents' preferences are highly correlated.
		\end{abstract}
		
		\begin{keyword}
			Shapley--Scarf markets \sep gains from integration \sep random markets.
		\end{keyword}
	\end{frontmatter}
	
	\newpage
	\setcounter{footnote}{0}
	\onehalfspacing
	
	\section{Introduction}
	\label{sec:introduction}
	
	Shapley--Scarf markets, in which agents own one house each which they can exchange among themselves without using monetary transfers, have been helpful to analyse several real-life allocation problems, such as the assignment of campus housing to students \citep{chen2002improving}, house allocation with existing tenants \citep{abdulkadirouglu1999house} and kidney exchanges involving incompatible donor-patient pairs \citep{roth2004kidney}. A common complication in these allocation problems is that a big market is fragmented into several small and disjoint ones, causing inefficiencies. For example, house swaps in Australia are restricted to tenants within the same constituencies and community housing provider, blocking potentially beneficial exchanges \citep{powell2019construction}. Similarly, most kidney exchanges in the US are conducted locally, despite the existence of centralized clearinghouses, which if used could increase the number of transplants  by  up to  63  percent \citep{agarwal2019market}.
	
	Motivated by these observations, we investigate theoretically the welfare effects of integrating disjoint Shapley--Scarf markets. In our model, there are $k$ Shapley--Scarf markets with $n_j$ agents each ($n_j$ is potentially different for each market) and $n$ agents in total. The segregated allocation is obtained by treating each community separately and calculating the core allocation for each of them. The integrated allocation is the core allocation for the entire economy.
	
	Our first result (Proposition \ref{thm:worst-number}) states that up to, but not more than, $n-k$ agents may be harmed by integration, i.e., they receive a house they prefer more when trade is only allowed within their own disjoint markets. This upper bound holds for any choice of $n$ and $k$. It shows that Shapley--Scarf markets may fail to integrate because doing so could generate significantly more losers than winners. 
	
	Our second result (Proposition \ref{thm:worst-size}) concerns the size of the gains from integration in terms of house rank. For example, if an agent receives her 3rd best house before integration, but her 1st best after integration, the size of her gains from integration is $3-1=2$. Even if most agents are harmed by the merge of disjoint markets, integration may still be justified if the size of the gains from integration experienced by a few is substantially larger than the size of the losses from many. We show that, in the worst-case scenario, the size of the average gains from integration may be down to, but not less than, $\frac{-n^2+n+k^2+k}{2 n^2}$. This lower bound can be achieved for any choice of $n$ and $k$, and shows that, asymptotically, integration may increase the average house rank by 50\% of the size of agents' preference lists.
	
	Taken together, our first two results show that there are real obstacles to the integration of Shapley--Scarf markets. For example, if we have three small markets that merge into one with 60, 30 and 10 agents respectively, up to 97 agents may obtain a worse house after integration occurs, and on average (across all agents) each agent may receive a house 50 positions down on her preference list, equivalent to going from her top choice to her 51st choice.
	
	However, these results are obtained in worst-case scenarios, which occur only when preferences are very specific. Consequently, studying the expected gains from integration across all possible preference profiles may be more informative. Therefore, our third result studies the size of the expected gains from integration in random Shapley--Scarf markets, in which agents' preferences over houses are drawn uniformly and independently.
	
	In Proposition \ref{thm:avg-size}, we compute the exact expected gains from integration, which equal $\frac{(n+1)[ (n_j+1)H_{n_j}-n_j]}{n_j(n_j+1)n}-\frac{(n+1)H_n - n}{n^2}$ (where $H_n$ is the $n$-th harmonic number). This result shows that the expected welfare gains from integration are positive for all agents, and larger for agents belonging to smaller markets. Going back to our example of three markets integrating with 60, 30 and 10 agents, the agents of the market with size ten go up 16 positions in their expected house rank, whereas those in the market of size sixty also increase their expected allocated house rank, but only by 2 rank positions. Our third result gives some context to our first two propositions, and shows that on average we should expect an overall positive effect from integration in Shapley--Scarf markets for agents from all disjoint markets.
	
	Our fourth result (Proposition \ref{thm:avg-number}) establishes a connection between the number of trading cycles that occur in the top trading cycles algorithm and the expected number of agents harmed by integration. We use this connection to show that the expected number of agents harmed by integration in each economy is less than $n_j-\sqrt{2\pi n_j} - O(\log  n_j)$, and consequently the expected number of agents harmed by integration in the entire economy is smaller than $n-\sqrt{2\pi} (\sum_{j=1}^k \sqrt{n_j}) - O(\log \prod_{j=1}^k n_j)$. In our example regarding the integration of markets with size 60, 30 and 10, our result implies that the expected number of agents harmed by integration is less than 44, 19, and 4 for each respective market. A consequence of our result is that, when all markets are of the same size, the expected fraction of agents harmed by integration is less than 50\% whenever each market has less than $8 \pi\approx 25.13$ agents.
	
	A different approach to ensure that integration does not harm a majority of agents is to focus on specific preference domains. We find a preference domain that achieves this purpose, called sequential dual dictatorship, which enforces a particular correlation among agents' preferences. When preferences satisfy this property, we can guarantee that no more than 50\% of agents in any individual market are harmed by integration (Proposition \ref{thm:domain}). The sequential dual dictator property is equivalent to assigning the title of dictator to at most two agents at each step of the top trading cycle algorithm, therefore bounding the length of cycles that can occur.

	\paragraph{Structure of the paper} Section \ref{sec:lit} discusses the literature. 
	Section \ref{sec:model} presents our model. 
	Section \ref{sec:example} introduces a running example.
	Section \ref{sec:worst} presents worst-case results. 
	Section \ref{sec:random} discusses average-case results. 
	Section \ref{sec:domains} studies preference domains. 
	Section \ref{sec:concl} concludes.
	
	\section{Related Literature}
	\label{sec:lit}
	
	A few other papers study the effects of integration on variations of Shapley--Scarf markets. For example, \cite{ashlagi2014free} study the incentives for hospitals to fully reveal their patient–donor pairs to a centralized clearinghouse. In their model, agents do not have preferences but only dichotomous compatibility restrictions. Thus, welfare is measured by the size of the matching. They obtain worst- and average-case results that have a similar flavour to ours: the average-case cost for hospitals to fully integrate into a centralized clearinghouse is small, but the worst-case cost is high. In the same framework as them, \cite{toulis2015} propose a mechanism that is efficient and asymptotically individually rational for hospitals. Our paper differs from the aforementioned articles in that we measure welfare in terms of how desirable the integrated allocation is with respect to the segregated one, rather than by the number of total exchanges (which is constant in the canonical Shapley--Scarf market that we consider where preferences are strict). Both welfare measures are relevant in different real-world settings and therefore we think of these two research strands as complementary.
	
	Our work is also related to a series of recent articles that have studied the integration of other types of markets without money, in particular for Gale--Shapley one-to-one matching markets \citep{ortega2018social,ortega2019losses}, Gale--Shapley many-to-one matching markets with applications to school choice \citep{manjunath2016,dougan2019unified,ekmekci2019common,turhan2019,aue2020}, exchange economies \citep{chambers2017gains,chambers2019can} and networking markets \citep{gersbach2019gainers}. Among these, the closest to ours are \cite{ortega2018social, ortega2019losses}. He shows that, in Gale--Shapley marriage markets, market integration never harms more agents than it benefits, even though the average rank of an agent’s spouse can decrease by 37.5\% of the length of agents' preference list. He also provides an approximation for the gains from integration in random markets. Some of our results parallel his for Gale--Shapley marriage markets, although ours are more general as: i) they apply to the integration of markets of different sizes, ii) they provide tight bounds on the welfare losses, and iii) in the case of the gains from integration in random markets, our results are exact rather than approximations.

	Our average-case results rely on two seminal papers from the computer science literature regarding random Shapley--Scarf markets with uniform and independent preferences. The first of these, by \cite{frieze1995probabilistic}, computes the expected number of iterations that the top trading cycles algorithm takes to find the unique core allocation and the number of cycles created in the process. The second paper, by \cite{knuth1996exact}, finds the expected sum of ranks of obtained houses and establishes the equivalence between the core allocation obtained from random endowments and the random serial dictatorship mechanism with no property rights.\footnote{The latter result was also independently discovered by \cite{abdulkadirouglu1998random}.} \cite{che2019efficiency} use a similar random market approach to show that, in a related two-sided model, the top trading cycles algorithm achieves efficiency and stability asymptotically when agents' preferences are independent. 
	
	\section{Model}
	\label{sec:model}
	\paragraph{Preliminary definitions} We study the housing market proposed by \cite{shapley1974}, where there are $n$ agents, each of them owning an indivisible good (say a house). The agents have strict ordinal preferences over all houses, including their own, and no agent has any use for more than one house.\footnote{We only consider the case where agents have strict preferences; for an analysis of housing markets with weak preferences, see \cite{quint2004houseswapping,alcalde2011exchange,aziz2012housing,jaramillo2012difference,saban2013house} and \cite{aslan2020competitive}.}
	
	Formally, let $N \coloneqq \{1,\ldots, n\}$ be the set of agents and let $\omega \coloneqq \{\omega_1,\ldots,\omega_n\}$ be the initial endowment of the market.
	Let $\succ_i$ denote the strict preference of agent $i$ and let  $\succ \coloneqq (\succ_i)_{i \in N}$. The weak preference corresponding to $\succ_i$ is denoted by $\succcurlyeq_i$. A \emph{housing market (HM)} is a pair $(N,\succ)$. An \emph{allocation} $x=\{x_1,\ldots, x_n\}$ is any permutation of the initial endowment. That is, $\omega_i$ (resp. $x_i$) denotes the house endowed (resp. allocated) to agent $i$.
	
	An allocation $x$ is \emph{individually rational} if $x_i \succcurlyeq_i \omega_i$ for all $i \in N$. An allocation $x$ is a \emph{core allocation} if there does not exist a coalition $S \subseteq N$ and an allocation $y$ such that $\{y_i : i \in S\}=\{\omega_i : i \in S\}$ and $y_i \succ_i x_i$ for all $i \in S$. 
	An allocation $x$ is \emph{Pareto optimal} if, for every alternative allocation $x'$ such that $x'_i \succ x_i$ for some $i \in N$, there exists some $j \in N$ for which $x_j \succ_j x'_j$. A \emph{matching mechanism} $\MM$ is a map from HMs to allocations, and is said to be a core one (resp. individually rational, Pareto optimal) if it produces a core (resp. individually rational, Pareto optimal) allocation for every HM.  The mechanism $\MM$ is \emph{strategy-proof} if, for every $i, \succ_i', \succ$, $\MM_i(N, \succ) \succcurlyeq_i \MM_i(N, (\succ_i',\succ_{-i}))$.
	
	There is a unique core allocation (henceforth denoted by $x^*$) in every housing market. The unique core allocation can be found with an algorithm known as \emph{top trading cycles} (TTC) \citep{shapley1974,roth1977}, which works by repeating the following two steps until all agents have been assigned a house.
	
	\begin{enumerate}
		\item Construct a graph with one vertex per agent.  Each agent points to the owner of his top-ranked house among the remaining ones. At least one cycle exists and no two cycles overlap. Select the cycles in this graph.
		
		\item Permanently assign to each agent in a cycle the object owned by the agent he points to. Remove all agents and objects involved in a cycle from the problem.
	\end{enumerate}
	
	TTC is the only mechanism satisfying individual rationality, Pareto-efficiency and strategy-proofness on the strict preference domain \citep{ma1994strategy}.
	
	\paragraph{New definitions} 
	
	We study \emph{extended housing markets} (EHM), which consist of a HM and a partition of the set of agents into $k$ disjoint communities $C_1, \ldots, C_k$. That is, an EHM is a triple ($N, \succ,C$), where $C\coloneqq \{C_1,\ldots, C_k\}$. An \emph{integrated allocation} is any allocation for the HM $(N,\succ)$, whereas a \emph{segregated allocation} is an allocation for $(N,\succ)$ in which every agent receives a house owned by an agent in her own community. That is, a segregated allocation $x$ is such that $\{x_i : i \in S \}= \{\omega_i : i \in S\}$ $\forall S \in C$. A matching scheme $\sigma$ is a map from EHMs into a segregated and integrated allocation, denoted by $\sigma(\cdot, C)$ and $\sigma(\cdot, N)$, respectively.\footnote{Matching schemes are similar to the concept of assignment schemes in cooperative game theory \citep{sprumont1990population}.}

	For agent's $i \in C_j$ preference $\succ_i$, we denote its restriction to $C_j$ by $\widetilde \succ_i$. In other words, $\widetilde \succ_i$ is the strict ranking of agent $i$ on all the houses belonging to agents in community $C_j$ (including his own) that is consistent with $\succ_i$.  The matching scheme $\sigma^*$ is the \emph{core matching scheme} if $\sigma^*(\cdot,N)$ is the core matching for the HM $(N,\succ)$ and, for every community $C_j$, $\sigma^*(\cdot,C_j)$ is the core matching for the HM $(C_j, \widetilde \succ_{C_j})$, where $\widetilde \succ_{C_j} \coloneqq (\widetilde \succ_i)_{i \in C_j}$. 
	
	The \emph{rank} of house $\omega_h$ in the preference order of agent $i$ is defined by $\rk_i(\omega_h) \coloneqq \left\vert \{ j \in N : \omega_j \succcurlyeq_i \omega_h \} \right\vert$. The \emph{gains from integration} for agent $i$ under the matching scheme $\sigma$ are defined as $\gamma_i (\sigma) \coloneqq  \rk_i(\sigma(i,C)) -  \rk_i(\sigma(i,N))$. The \emph{total gains from integration} are given by $\Gamma (\sigma) \coloneqq \sum_{i \in N} \gamma_i$. If these are negative, we speak of the total losses from integration. The \emph{average percentile gains from integration} are denoted by $\overline \Gamma (\sigma) \coloneqq \frac{\Gamma (\sigma)}{n^2}$. We divide by $n^2$ to account for both the number of agents ($n$) and the length of an agent's preference list (which is also $n$). Thus, $\overline \Gamma (\sigma) \in (-1,1)$, where $\overline \Gamma (\sigma)=-1$ means that everybody was harmed by integration and moved from their best possible house to the worst possible one. 
	
	We use $N^+(\sigma) \coloneqq \{ i \in N : \sigma(i, N) \succ_i \sigma(i, C)  \}$ to denote the set of agents who benefit from integration. Similarly, $N^0(\sigma) \coloneqq \{ i \in N : \sigma(i, N) = \sigma(i, C)  \}$ and $N^-(\sigma) \coloneqq \{ i \in N : \sigma(i, C) \succ_i \sigma(i, N)  \}$ denote the set of agents that are unaffected and harmed by integration, respectively. For all $j \in \{1, \ldots, k\}$, we define $N^+_{C_j}(\sigma) \coloneqq \{ i \in C_j : \sigma(i, N) \succ_i \sigma(i, C)  \}$ to be the set of agents in community $C_j$ who benefit from integration. The sets $N^0_{C_j}(\sigma)$ and $N^-_{C_j}(\sigma)$ are defined analogously.
	
	Henceforth we focus on $\sigma^*$, i.e. we study the gains from integration that occur when the allocation obtained before and after integration occurs is the unique core allocation.
	
	\section{Running Example}
	\label{sec:example}
	
	Example \ref{tab:ex1} presents an EHM that we will use throughout the paper to illustrate how market integration may harm the majority of agents, and how their welfare losses can be significant. In this EHM, $n=7$ and $k=2$ with $C_1=\{a,b,c\}$ and $C_2=\{d,e,f,g\}$. The integrated (resp. segregated) core allocation appears in a diamond (resp. circle). 	
	\begin{table}[!htbp]
		\begin{center}
			\caption[caption,justification=centering]{An EHM with $C_1=\{a,b,c\}$ and $C_2=\{d,e,f,g\}$.}
			\label{tab:ex1}
			\resizebox{.75\textwidth}{!}{%
				
				\begin{tabular}{ccc|cccc}
					
					a 				& b 			& c 			& d 			& e 			& f 			& g \\ 
					\midrule
					
					\ensquare{d} 	& \encircle{a} 	& \encircle{b} 	& \ensquare{a} 	& \encircle{d} 	& \encircle{e} 	& \encircle{f} \\ 
					\encircle{c}	&  d			& a  			& \encircle{g} 	&a 				& a				& a\\ 
					\vdots			& \ensquare{b}	& d				& \vdots 		&b  			& d 			&  d \\ 
					& \vdots		& \ensquare{c}	&  				&c  			& b 			& b \\ 
					&   			&  				&  				&\ensquare{e}  	& c 			& c\\ 
					& 				&  				&  				&\vdots  		&\ensquare{f}	& e\\ 
					&     			&  				&  				&  				& g				& \ensquare{g} \\ 
			\end{tabular} }
		\end{center}
	\end{table}
	
In Example \ref{tab:ex1}, there are two communities with three and four agents each, such that one agent from each community (in this case $a$ and $d$) is assigned to their second best house in the segregated core allocation, whereas all remaining agents are assigned to their most preferred house. However, when both communities integrate, $a$ and $d$ exchange their houses, each obtaining their most preferred house, and thus making that all other five agents are assigned to their own house, which they prefer less than the segregated core allocation.

The two agents who experience welfare gains ($a$ and $d$) go from their second to their first best after integration occurs, obtaining a rank gain of +1. However, agent $c$ goes from his first to his third best (a change of -2 in rank), agent $g$ goes from his first to his fourth best (a change of -3 in rank), and so on, until agent $e$ who goes from his best to his worst option (a change of -6 in rank). When we add the total welfare losses ($+1+1-2-3-4-5-6$), we obtain $-\frac{1}{2} (n^2 -n -k^2 - k)=-18$. Dividing -18 by $n^2=49$, we find an average welfare reduction of 36.7\% of the length of agents' preferences. 
	
	In the next section, we generalize these findings, providing upper bounds for i) the number of agents harmed by integration, and ii) the size of average welfare losses.
	\section{Worst-case Results}
	\label{sec:worst}
	
	Unfortunately, the integration of housing markets may harm the vast majority of agents. In the worst-case scenario, up to $n-k$ agents are harmed by integration, and this upper bound is tight.
	
	\begin{proposition}
		\label{thm:worst-number}
		For any pair $(n,k)$, there exists an EHM in which $|N^-(\sigma^*)|=n-k$; whereas there is no EHM in which $|N^-(\sigma^*)|> n-k$.
	\end{proposition}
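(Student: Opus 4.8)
The statement has two halves: an upper bound, that $|N^-(\sigma^*)|\le n-k$ for \emph{every} EHM with $k$ communities, and a matching construction showing the bound is attained for each pair $(n,k)$ (I will implicitly take $k\ge 2$, since for $k=1$ the segregated and integrated allocations coincide and $N^-(\sigma^*)=\emptyset$).

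For the upper bound the plan is to show that each community $C_j$ contains at least one agent who is not harmed; disjointness of the $C_j$ then gives $|N\setminus N^-(\sigma^*)|\ge k$, hence the bound. The idea is to run TTC on the integrated market $(N,\succ)$ and, inside $C_j$, pick an agent $i$ that is removed in the earliest TTC round, say round $r$. Since no member of $C_j$ leaves before round $r$, every house owned by a member of $C_j$ is still present at the start of round $r$; and because $i$ is removed in round $r$ it lies on a cycle and therefore obtains its favourite house among those still present, so $\sigma^*(i,N)\succcurlyeq_i\omega_h$ for all $h\in C_j$. As $\sigma^*(i,C)$ is, by definition of a segregated allocation, a house owned by some member of $C_j$, this forces $\sigma^*(i,N)\succcurlyeq_i\sigma^*(i,C)$, i.e.\ $i\notin N^-(\sigma^*)$. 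This is the heart of the argument; the rest is counting.

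For tightness the plan is to generalise Example~\ref{tab:ex1}. Write the $n_j$ agents of $C_j$ as $0_j,1_j,\dots,m_j$ with $m_j=n_j-1$, single out the ``special'' agent $0_j$ of each community, and construct preferences so that: (a) restricted to $C_j$, the top choices form the single cycle $0_j\to m_j\to(m_j-1)_j\to\cdots\to 1_j\to 0_j$, so segregated TTC closes it in one round with $\sigma^*(0_j,C)=\omega_{m_j}$ and $\sigma^*(t_j,C)=\omega_{(t-1)_j}$ for $t\ge 1$; (b) each non-special agent $t_j$ ranks \emph{its own} house $\omega_{t_j}$ immediately below $\omega_{(t-1)_j}$, and ranks every house of $C_j$ above every house outside $C_j$; (c) the special agents prefer one another cyclically, namely $0_j$ ranks $\omega_{0_{j+1}}$ first (indices mod $k$) and $\omega_{m_j}$ second. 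I would then verify that the integrated TTC behaves as follows: round~1 removes exactly the $k$-cycle $0_1\to\cdots\to 0_k\to 0_1$, so each $0_j$ improves from $\omega_{m_j}$ to $\omega_{0_{j+1}}$ and lies outside $N^-(\sigma^*)$; and once the houses $\omega_{0_1},\dots,\omega_{0_k}$ have left, each community collapses like a row of dominoes — $1_j$ now finds its own house its best remaining option and self-cycles, then $2_j$ does, and so on — so every $t_j$ ends up with $\omega_{t_j}$, which it ranks below its segregated house $\omega_{(t-1)_j}$ and is therefore harmed. Thus $N^-(\sigma^*)=\bigcup_{j}\{1_j,\dots,m_j\}$ and $|N^-(\sigma^*)|=\sum_j(n_j-1)=n-k$.

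The main obstacle is the construction, not the bound. The delicate point is to reconcile the two rankings the argument leans on simultaneously — the within-$C_j$ ranking, which must pin down the segregated core as the single cycle in (a), and the full ranking over all $n$ houses, which must drive the integrated TTC as in (c) — while also ruling out any cross-community exchange or any unintended cycle after the special agents leave. Placing all of $C_j$'s houses above all outside houses in every non-special agent's list is the device that makes the domino collapse clean; the remaining work is just checking that round~1 of the integrated TTC contains no cycle besides $0_1\to\cdots\to 0_k\to 0_1$ and that the induction over the subsequent rounds goes through as claimed.
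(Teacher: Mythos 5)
Your proof is correct, but it differs from the paper's in both halves, most substantially in the upper bound. For the bound, the paper argues by contradiction: if more than $n-k$ agents were harmed, then by pigeonhole some community $C_j$ would consist entirely of harmed agents, and $C_j$ together with its segregated allocation would then block the integrated allocation, contradicting that the latter is in the core. Your argument instead runs TTC on the integrated market and exhibits, constructively, an unharmed agent in each community --- the one removed earliest, who at that moment still sees every house of her own community available and therefore receives something weakly better than her segregated assignment. Both are sound; the paper's version has the advantage of relying only on the core property of the integrated allocation (no appeal to the algorithm), while yours identifies \emph{which} agent is protected and yields the same per-community conclusion directly rather than by contradiction. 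For tightness, both constructions generalise Example~\ref{tab:ex1} via a $k$-cycle of one ``special'' agent per community who gains exactly one rank from integration. Your version is simpler: after the special agents trade, every remaining agent self-cycles onto her own house and drops exactly one rank position, which suffices to count $n-k$ harmed agents. The paper's more elaborate preference lists (ordering all special houses above all non-special ones, and the non-special houses by index) are engineered so that the same example also drives the losses down to the extreme values $-2,\dots,-(n-1)$ needed for Proposition~\ref{thm:worst-size}; for Proposition~\ref{thm:worst-number} alone your lighter construction is enough, and your verification that round~1 of the integrated TTC contains only the special $k$-cycle and that the subsequent domino collapse stays within each community is exactly the checking required.
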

	
	\begin{proof} 
		The EHM in Example \ref{tab:ex1} illustrates an EHM showing that the $n-k$ bound is attainable. We can extend the construction of this example to arbitrary values of $n$ and $k$ as follows:		
		\begin{enumerate}
			\item Enumerate agents arbitrarily so that agents from community $C_1$ are first, then those in $C_2$, and so on. Separate agents into two sets, namely $X$ and $N \setminus X$. The set $X$ contains the first agent from each community only. The agent from community $C_j$ in $X$ is denoted by $j^*$. The last agent in community $C_j$ (which is in $N \setminus X$) is denoted by $\underline j$.
			
			\item The preferences for any agent $i^* \in X$ are such that: 
			\begin{enumerate}
				\item $\rk_{\succ_{i^*}}((i+1)^*)=1$ (modulo $k$) and 
				\item $\rk_{\succ_{i^*}}(\underline i)=2.$
			\end{enumerate}

			\item The preferences for any agent $i \in N \setminus X$ are such that: 
			
			\begin{enumerate}
				\item $\rk_{\succ_{i}}(i-1)=1$,
				\item For any $j^* \in X$ and $h \in N \setminus X$ (with $h \neq i+1$),  $\rk_{\succ_{i}}(j^*)< \rk_{\succ_{i}}(h)$, and
				\item For any two $h, h' \in N\setminus X$ and $h, h' \neq i+1$, $\rk_{\succ_{i}}(h)< \rk_{\succ_{i}}(h')$ if $h < h'$.
			\end{enumerate}
		\end{enumerate} 
		
		Constructing the preferences in such a way guarantees that, in the segregated core allocation, every agent in $N \setminus X$ obtains their first choice, whereas every agent in $X$ gets their second choice. In contrast, in the integrated core allocation, every agent in $X$ obtains their first choice, whereas everybody in $N \setminus X$ obtains an object ranked from $k+1$ to $n$. Example \ref{tab:ex1} was constructed in this fashion.
		
		To see that the $n-k$ upper bound is tight, assume by contradiction that more than $n-k$ agents are harmed by integration, which implies that there is one community in which all agents are harmed by integration, say $C_j$. But then $\sigma^*(\cdot, N)$ is not a core allocation for $(N, \succ)$, because any alternative allocation $x$ such that $x_i=\sigma^*(i,C)\; \forall i \in C_j$ dominates it (since $C_j$ is effective for allocation $x$ and every agent in $C_j$ prefers the segregated over the integrated allocation). That the integrated core allocation is not a core allocation is a contradiction, which terminates the proof.
	\end{proof}
	
	Proposition \ref{thm:worst-number} implies that integration may harm the majority of agents in Shapley--Scarf markets. This is a striking observation, since the integration of Gale--Shapley marriage markets (in which two sets of agents are matched to each other) always benefits more agents than those it harms (see Proposition 2 in \cite{ortega2018social}, also \cite{gale1962college,gardenfors1975match}).\footnote{One may consider the opposite scenario, in which an integrated market of size $n$ breaks into $k$ disjoint communities. Simple examples shows that all agents can become worse off after markets disintegrate, irrespective of the value of $k$.}
	
	Given the negative result in Proposition \ref{thm:worst-number}, we may think that integration can still be justified if the size of the welfare gains experienced by a minority are much larger than the size of the welfare losses suffered by a majority. Unfortunately, in the worst-case scenario, the size of the losses from integration is much larger than the size of the gains from integration. In particular, we show below that the agents' average welfare loss may be negative and asymptotically equivalent to an increase in ranking of 50\% of the length of agents' preference list. We provide a tight lower bound on the size of agents' average welfare loss.
	
	\begin{proposition}
		\label{thm:worst-size}
		For any pair ($n,k$), there exists an EHM in which $\overline \Gamma (\sigma^*) = \frac{-n^2+n+k^2+k}{2 n^2}$; whereas there is no EHM in which $\overline \Gamma (\sigma^*) < \frac{-n^2+n+k^2+k}{2 n^2}$
	\end{proposition}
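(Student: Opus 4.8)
\emph{My plan} is to prove the two halves of the statement separately: first exhibit an EHM that attains the value $\frac{-n^2+n+k^2+k}{2n^2}$, and then show no EHM can fall below it.

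\emph{Attainability.} I would re-use the EHM constructed in the proof of Proposition~\ref{thm:worst-number}. There, the segregated core allocation gives each of the $k$ agents of $X$ her second choice and each of the $n-k$ agents of $N\setminus X$ her first choice, while the integrated core allocation gives each agent of $X$ her first choice and assigns the $n-k$ agents of $N\setminus X$, in some order, the houses ranked $k+1,k+2,\dots,n$ on their own lists. The latter is verified by running TTC: $X$ leaves as a single $k$-cycle in the first round, after which the remaining agents leave one at a time, each taking her own (now top-remaining) house. Hence $\gamma_i=2-1=1$ for each $i\in X$ and $\sum_{i\in N\setminus X}\gamma_i=(n-k)-\sum_{r=k+1}^{n}r=(n-k)-\frac{n(n+1)-k(k+1)}{2}$, so $\Gamma(\sigma^*)=k+(n-k)-\frac{n(n+1)-k(k+1)}{2}=\frac{-n^2+n+k^2+k}{2}$ and $\overline\Gamma(\sigma^*)=\frac{-n^2+n+k^2+k}{2n^2}$, as required.

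\emph{Tightness.} Writing $R_N=\sum_{i\in N}\rk_i(\sigma^*(i,N))$ and $R_C=\sum_{i\in N}\rk_i(\sigma^*(i,C))$, the claim is equivalent to $R_N-R_C\le\frac{n^2-n-k^2-k}{2}=\binom{n}{2}-\binom{k+1}{2}$. I would bound $R_N$ from above by running the integrated TTC one cycle at a time: if $z_1,\dots,z_m$ are the cycle sizes in order of removal, every agent of the $t$-th cycle receives her favourite house among those not yet removed, hence has rank at most $1+(z_1+\dots+z_{t-1})$, and summing gives $R_N\le n+\sum_{s<t}z_sz_t=n+\frac{n^2-\sum_t z_t^2}{2}$. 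For the community side I would use two structural facts. (i) If a community $C_j$ receives exactly the houses it owns in $\sigma^*(\cdot,N)$, then the restriction of $\sigma^*(\cdot,N)$ to $C_j$ is a core allocation of $(C_j,\widetilde\succ_{C_j})$ --- a blocking coalition inside $C_j$ would block in $(N,\succ)$ --- so by uniqueness of the core it equals $\sigma^*(\cdot,C_j)$, and those agents contribute $0$ to $\Gamma(\sigma^*)$. (ii) If instead some agent of $C_j$ gets a house owned outside $C_j$, then the agents of $C_j$ cannot all be receiving their overall favourite in $\sigma^*(\cdot,C_j)$: otherwise the ``owner-of-favourite'' map on $C_j$ would be a permutation with every element on a cycle, forcing $C_j$ to clear internally already in the first round of the integrated TTC; hence $\sum_{i\in C_j}\rk_i(\sigma^*(i,C))\ge n_j+1$. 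Letting $E\subseteq\{1,\dots,k\}$ index the communities that trade externally, facts (i) and (ii) reduce the whole inequality to an estimate on $\bigcup_{j\in E}C_j$ together with the integrated-TTC cycles meeting it, which I would then optimise over cycle-size profiles and community sizes; the worst case is exactly the instance of the first half ($X$ a single $k$-cycle, all other agents singletons).

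The step I expect to be the main obstacle is this last optimisation. The crude estimates $R_N\le\binom{n+1}{2}$ and $R_C\ge n$ together leave a slack of precisely $\binom{k+1}{2}$, so one must show that whenever $R_N$ is near its maximum --- i.e. the integrated TTC has many or large cross-community cycles --- the community contributions to $R_C$ are simultaneously forced up by almost the same amount. Quantifying this trade-off needs a careful accounting of how the integrated-TTC cycles distribute over the partition $C_1,\dots,C_k$, and is also where any implicit restriction (such as $k<n$, without which the $E=\emptyset$ case already makes $\Gamma(\sigma^*)=0$, below the stated bound) must be invoked.
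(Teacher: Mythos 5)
Your attainability half is correct and is essentially the paper's own argument: you reuse the construction from Proposition~\ref{thm:worst-number}, verify via TTC that the integrated allocation gives the $k$ agents of $X$ their top choice and the remaining agents ranks $k+1,\dots,n$, and the arithmetic $\Gamma(\sigma^*)=k+(n-k)-\sum_{r=k+1}^{n}r=\frac{-n^2+n+k^2+k}{2}$ checks out.

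The tightness half, however, has a genuine gap. You reduce the claim to $R_N-R_C\le\binom{n}{2}-\binom{k+1}{2}$ and propose to bound the two sides separately --- $R_N$ from above via the cycle-size inequality $R_N\le n+\tfrac{1}{2}(n^2-\sum_t z_t^2)$, and $R_C$ from below via facts (i)--(ii) --- deferring the ``last optimisation'' that must close a residual deficit of $\binom{k+1}{2}$. You rightly flag this as the main obstacle, but it is not a bookkeeping issue: bounds obtained independently on the two sides cannot close the gap. The cycle profile that maximises your $R_N$ bound is all singletons ($\sum_t z_t^2=n$, $R_N=\binom{n+1}{2}$), but in that configuration every agent keeps her endowment in the integrated allocation, and individual rationality of each segregated core gives $\rk_i(\sigma^*(i,C))\le\rk_i(\omega_i)=\rk_i(\sigma^*(i,N))$, hence $\Gamma(\sigma^*)\ge 0$ --- far above the target. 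So the extremum is interior and the two sides must be coupled; moreover your fact (ii) contributes only $+1$ per externally-trading community, i.e.\ at most $+k$ of the quadratic deficit $\binom{k+1}{2}$. The missing ingredient is a constraint linking \emph{who is harmed} to \emph{how bad their integrated rank can be}. The paper supplies exactly this with two facts: (a) by Proposition~\ref{thm:worst-number} at most $n-k$ agents are harmed, and only harmed agents contribute negatively, with $\gamma_i\ge 1-\rk_i(\sigma^*(i,N))$; and (b) an auxiliary lemma showing that in \emph{any} core allocation at most $n-r+1$ agents can receive a house they rank $r$th or worse (any $n-r+2$ such agents contain a blocking coalition), so the integrated ranks of the harmed agents sum to at most $n+(n-1)+\cdots+(k+1)$. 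Combined with a $+1$ gain attributed to each of the remaining $k$ agents, this yields the bound with no optimisation over cycle profiles. Your closing observation that the ``no EHM below'' claim fails when all communities are singletons ($k=n$, where the bound is $1/n>0$ yet identity preferences give $0$) is a fair edge-case catch, but it is orthogonal to the missing step above.
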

	
	\begin{proof}
		Example \ref{tab:ex1} shows that our lower bound for $\overline \Gamma (\sigma^*)$ is attainable. We constructed the EHM in Example \ref{tab:ex1} in such a way that the minimum possible number of agents gain from integration (i.e. $k$, per Proposition \ref{thm:worst-number}), and that the size of such gains is as small as possible (+1). On the other side, the welfare losses of the remaining $n-k$ individuals go from $-2$ to $-n+1$ (the largest possible welfare loss). We can replicate such construction for EHMs with arbitrary values of $n$ and $k$ as described in the proof of Proposition \ref{thm:worst-number} to obtain:
		\begin{eqnarray}
			\overline \Gamma (\sigma^*) &=& \frac{1}{n^2} \left(k*1- \sum_{i=1}^{n-k} n-i \right)\\
			&=& \frac{1}{n^2}\left(k-n(n-k)+\sum_{i=1}^{n-k} i\right)\\
			&=& \frac{1}{n^2} \left(k - n^2 +nk+\frac{(n-k)(n-k+1)}{2}\right)\\
			&=& -\frac{1}{2n^2} (n^2 -n -k^2 - k)
		\end{eqnarray}
		
		This establishes that our lower bound can be attained for arbitrary values of $n$ and $k$. It is interesting that our lower bound does not depend on the size of each community relative to the size of the whole society. Note that when $n$ grows and $k$ remains constant, $\overline \Gamma (\sigma^*) \sim -1/2$. 
		
		We now show that our lower bound for $\overline \Gamma (\sigma^*)$ is tight, with the help of some additional definitions and two auxiliary lemmas. Given a core allocation $x^*$ for a HM  $(N, \succ)$  and an integer $r$ such that $1\leq r \leq n$, let $m(r,x^*) \coloneqq | \{i \in N: \rk_i (x_i^*)\} =r|$. Similarly, let $M(r,x^*) \coloneqq | \{i \in N: \rk_i (x_i^*)\} \geq r|$.

		\begin{lemma}
			\label{lemma1}
			In any core allocation $x^*$, $\rk_i(x^*_i)\leq \rk_i(\omega_i)$.
		\end{lemma}
		\begin{proof}
			This is a well-known fact due to any core allocation being individually rational.
		\end{proof}
		
		\begin{lemma}
			\label{lemma2}
			In any core allocation $x^*$, $m(r, x^*) \leq n-r+1$.
		\end{lemma}
		
		\begin{proof}
			
			For $r=n$, our lemma says $m(n,x^*)\leq 1$. Note that if $\rk_i(x_i^*)=n$, then ${x_i}^*=\omega_i$ because of Lemma \ref{lemma1}. Therefore, we cannot have $m(n,x^*)> 1$, as otherwise two agents are assigned their own house but they would like to exchange their house with each other, and thus $x^*$ is not a core allocation.

			For $r=n-1$, suppose by contradiction that $m(n-1, x^*) >2$. Then there exists three agents $j,l,h$ for which $\rk(x^*_i)=n-1$ for all $i \in \{j,l,h\}$. But for each of those agents, there exists a house $\omega'_i \in \{\omega_j,\omega_l,\omega_h\}$ such that $\omega'_i \succ_i x_i^*$ and $\omega'_i \succ_i \omega_i$ for all $i \in \{j,l,h\}$. Therefore, $x^*$ is not a core allocation, since there is a reallocation of houses among $j,l,h$ that is effective for such coalition and that is strictly preferred. 
			
			The same argument applies for any other values of $r<n-1$. Suppose by contradiction that there exist some $r' \le n-1$ such that $m(r', x^*) > n-r' +2$. Then there are $n-r'+2$ agents for which $\rk(x^*_i)=r'$. But for each of these agents $i$, there exists a house $\omega_j$ belonging to one of these $n-r'+2$ agents such that $\omega_j \succ_i x_i^*$ and $\omega_j \succ_i \omega_i$. Therefore, $x^*$ is not a core allocation, since there is a reallocation of houses among those $n-r'+2$ agents that is effective for such coalition and that is strictly preferred. Hence, the argument holds for all $r$.
			
		\end{proof}
		
		\begin{lemma}
			\label{lemma3}
			In any core allocation $x^*$, $M(r, x^*) \leq n-r+1$.
		\end{lemma}
		
		\begin{proof}
			For $r=n$, the statement in Lemma \ref{lemma3} is the same as in Lemma \ref{lemma2}. For $r=n-1$, assume by contradiction that $M(n-1,x^*)>2$. By Lemma 2 we cannot have that two agents are allocated a house ranked $n$ for both, or that three agents are allocated a house ranked $n-1$. Thus, it must be that one agent gets a house ranked $n$ (agent $j$) and two agents get a house ranked $n-1$ (agents $h$ and $l$). Then we have $x_j=\omega_j$ by Lemma \ref{lemma1}. Furthermore, for $i \in \{h,l\}$, there are two houses $x'_i, x_i'' \in \{\omega_j,\omega_h,\omega_l\}$ such that  $x'_i \succ_i x_i$ and $x_i \succ_i x''_i$, where $x'_i \neq \omega_i$ per Lemma \ref{lemma1}. If, for either agent $h$ or $l$, $x_i'=\omega_j$, then $j$ and such agent would like to exchange their endowments and would be strictly better off, and thus $\rk_h(\omega_j)=\rk_l(\omega_j)=n$. But because $\rk_h(x_h)=\rk_l(x_l)=n-1$, they must be getting their own houses, i.e. $x_h=\omega_h$ and $x_l=\omega_l$. But then, agents $h$ and $l$ are better of by trading their endowments, and thus $x^*$ is not a core allocation, a contradiction. The same argument applies for all other values of $r<n-1$.
		\end{proof}

		Armed with these three auxiliary lemmas, we are ready to prove that $\overline \Gamma (\sigma^*) \geq \frac{-n^2+n+k^2+k}{2 n^2}$. By Proposition \ref{thm:worst-number}, at most $n-k$ people may experience negative gains from integration. These are defined, for each agent $i$, as $\gamma_i(\sigma^*)\coloneqq \rk_i(\sigma^*_i(i,C))-\rk_i(\sigma^*_i(i,N))$. To make $\gamma_i(\sigma^*)$ as small as possible, we need to fix $\rk_i (\sigma^*_i(i,C))=1$ and make $\rk_i(\sigma^*_i(i,N))$ as large as possible. But Lemma 3 shows that $\rk(\sigma^*_i(i,N))=n$ for at most one agent, $\rk_i(\sigma^*_i(i,N)) \geq n-1$ for at most two agents, and so on. Thus, in the worst case scenario, the sum of the welfare gains from integration among those $n-k$ agents equals
		\begin{equation}
			-\sum_{i=1}^{n-k} (n-i)=\frac{-n^2+n+k^2-k}{2}
		\end{equation}
		
		Similarly, the smallest positive gains from integration for the remaining $k$ agents (which must exists by Proposition \ref{thm:worst-number}) are equal to 1. Thus, the smallest possible value for $\overline \Gamma (\sigma^*)$ is 
		\begin{equation}
			\overline \Gamma (\sigma^*) =-\frac{1}{2n^2} (n^2 -n -k^2 - k)
		\end{equation}
	\end{proof}
	
	Proposition \ref{thm:worst-size} can be compared to an analogous result in Gale--Shapley marriage markets. The average welfare gains may also be negative in Gale--Shapley markets, but only up to 37.5\% of the length of preference lists \citep{ortega2019losses}.\footnote{This lower bound is not proven to be tight but is the best bound available.} Taken together, Propositions \ref{thm:worst-number} and \ref{thm:worst-size} show that the integration of Shapley--Scarf markets can be hard to achieve, and in particular is more difficult to obtain (in the worst-case scenario) than in Gale--Shapley marriage markets.
	
	\section{Average-case Results}
	\label{sec:random}
	In the previous section we found two negative results regarding the integration of Shapley--Scarf markets; however both results are about worst-case scenarios. While these results are interesting on their own, one may argue that these are knife-edge scenarios, and wonder whether market integration would generate welfare gains on average.  
	
	To answer this question, we study \emph{random  housing markets} (RHM). Given a set of agents, a RHM is generated by drawing a complete preference list for each agent independently and uniformly at random. Similarly, a \emph{random extended housing market} (REHM) is a RHM where the set of agents is partitioned into disjoint communities $C_1, \ldots, C_k$, each of size $n_1, \ldots, n_k$ (where $n=n_1+\ldots+n_k$). 
	We emphasize that the randomness refers to agents' preferences and not to the partition $C$, which is deterministic.
	Random housing markets were first studied by \cite{frieze1995probabilistic} and \cite{knuth1996exact}. The latter proved the following seminal result.
	
	\begin{lemma}[\citealp{knuth1996exact}]
		In a RHM, $\EE(\sum_{i=1}^n \rk_i(x_i^*)) = (n+1)H_n-n$, where $H_n$ is the $n$-th harmonic number, i.e. $H_n \coloneqq \sum_{i=1}^n \frac{1}{i}$.
	\end{lemma}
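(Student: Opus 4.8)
The plan is to transport the problem to a serial dictatorship and then evaluate a binomial double sum. The starting point would be the equivalence, established by \cite{knuth1996exact} in the very same paper (and independently by \cite{abdulkadirouglu1998random}), between the core allocation of a RHM with arbitrary endowments and the outcome of the \emph{random serial dictatorship} (RSD) with no property rights: draw a priority order $\tau$ over the agents uniformly at random and independently of preferences, and let agents pick their favourite still-available house in the order $\tau$. This equivalence implies in particular that $\EE\bigl(\sum_{i=1}^{n}\rk_i(x_i^*)\bigr)$ equals the expected sum of the ranks of the houses received under RSD, and it is this latter quantity that I would compute.

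To do so, condition on $\tau$ and let $a_t$ be the agent placed in position $t$. The set $S_t$ of houses still available when $a_t$ chooses is a function of $\tau$ and of the preferences of $a_1,\dots,a_{t-1}$ alone; hence it is independent of $a_t$'s preference order (a uniform random ranking of the $n$ houses), and $|S_t|=n-t+1$ deterministically. Since $a_t$ receives her favourite house in $S_t$, the rank $R_t$ of the house she obtains is the least position, in her own preference order over all $n$ houses, at which a house of $S_t$ appears. A direct count — the event $R_t\ge j$ says that the $j-1$ top-ranked houses of $a_t$ all lie in the complement of $S_t$, a set of size $t-1$ — gives
\[
\Pr(R_t\ge j)\;=\;\frac{\binom{t-1}{\,j-1\,}}{\binom{n}{\,j-1\,}}\qquad(1\le j\le t),
\]
so that $\EE(R_t)=\sum_{i=0}^{t-1}\binom{t-1}{i}/\binom{n}{i}$.

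Summing over $t$, exchanging the order of summation, and applying the hockey-stick identity $\sum_{u=i}^{n-1}\binom{u}{i}=\binom{n}{i+1}$ together with $\binom{n}{i+1}/\binom{n}{i}=(n-i)/(i+1)$ should give
\begin{align*}
\EE\Bigl(\sum_{i=1}^{n}\rk_i(x_i^*)\Bigr)
&=\sum_{t=1}^{n}\EE(R_t)
=\sum_{i=0}^{n-1}\frac{1}{\binom{n}{i}}\sum_{u=i}^{n-1}\binom{u}{i}
=\sum_{i=0}^{n-1}\frac{\binom{n}{i+1}}{\binom{n}{i}}\\
&=\sum_{i=0}^{n-1}\frac{n-i}{i+1}
=\sum_{i=0}^{n-1}\Bigl(\frac{n+1}{i+1}-1\Bigr)
=(n+1)H_n-n ,
\end{align*}
which is the claimed formula.

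The binomial bookkeeping and the count behind $\Pr(R_t\ge j)$ are routine; the one substantive ingredient — and the main obstacle — is the RSD equivalence itself, which is what licenses replacing the endogenous order in which TTC finalizes agents (an order correlated with those agents' own preferences, so that the argument of the second paragraph cannot be run directly on TTC) by an exogenous, preference-independent uniform order. To be self-contained I would prove it in a preliminary lemma: first observe that TTC assigns to each agent her favourite house among those available when her cycle clears, so TTC is a serial dictatorship for the order induced by the sequence of cleared cycles; then show that, under i.i.d.\ uniform preferences, this induced order can be coupled with a uniformly random, preference-independent order without altering the distribution of the vector of received-house ranks. Establishing that coupling is the crux; everything downstream of it is mechanical.
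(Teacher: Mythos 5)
The paper does not prove this lemma at all: it imports it verbatim from \cite{knuth1996exact}, so there is no in-paper argument to compare yours against. Your reconstruction is nonetheless correct. The identity $\Pr(R_t\ge j)=\binom{t-1}{j-1}/\binom{n}{j-1}$ is right (the top $j-1$ entries of a uniform permutation form a uniform $(j-1)$-subset, and $S_t$ is independent of $a_t$'s own ranking conditional on $\tau$), and the binomial bookkeeping --- hockey stick plus $\binom{n}{i+1}/\binom{n}{i}=(n-i)/(i+1)$ --- does telescope to $(n+1)H_n-n$; I checked each step. A marginally shorter route to the same sum: $R_t$ is the minimum of a uniform $(n-t+1)$-subset of $\{1,\dots,n\}$, so $\EE(R_t)=(n+1)/(n-t+2)$ directly, and summing gives $(n+1)(H_{n+1}-1)=(n+1)H_n-n$ with no binomial identities at all. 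The one substantive ingredient you lean on --- the equivalence between the core from random endowments and random serial dictatorship --- you correctly identify as the crux but only sketch, so as written your argument is not self-contained; however, this equivalence is precisely the theorem the paper itself attributes to \cite{knuth1996exact} and \cite{abdulkadirouglu1998random}, so relying on it puts you on the same footing as the authors. If you do write the preliminary lemma, make one point explicit: the equivalence is usually stated for a fixed preference profile with the endowment randomized, whereas the RHM fixes the endowment and randomizes preferences. You therefore need the (easy) exchangeability observation that under i.i.d.\ uniform preferences the distribution of the rank vector of the core allocation does not depend on which endowment is fixed, hence equals its average over random endowments, to which the equivalence then applies.
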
 
	
	We can use Knuth's theorem to find the expected size of the average welfare gains in REHMs. Let us define the \emph{total gains from integration for community $C_j$} as $\Gamma_{C_j} (\sigma) \coloneqq \sum_{i \in C_j} \gamma_i$.\footnote{Recall that $\gamma_i (\sigma) \coloneqq  \rk_i(\sigma(i,C)) -  \rk_i(\sigma(i,N))$.} The \emph{average percentile gains from integration for community $C_j$} are denoted by $\overline \Gamma _{C_j}(\sigma) \coloneqq \frac{\Gamma (\sigma)}{n\, n_j}$. We divide by $n_j$ to take the average across all agents in community $C_j$, and by $n$ to normalize by the length of agents' preference lists. Equipped with these new definitions, we can compute average gains from integration, which are positive for agents belonging to any community. 
	
	\begin{proposition}
		\label{thm:avg-size}
		$\EE [\overline \Gamma_{C_j} (\sigma^*)] =   \frac{(n+1)[ (n_j+1)H_{n_j}-n_j]}{n_j(n_j+1)n}-\frac{(n+1)H_n - n}{n^2}$. 
	\end{proposition}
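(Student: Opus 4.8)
The plan is to expand, by linearity of expectation,
$$\EE[\overline\Gamma_{C_j}(\sigma^*)]=\frac{1}{n\,n_j}\left(\sum_{i\in C_j}\EE[\rk_i(\sigma^*(i,C))]-\sum_{i\in C_j}\EE[\rk_i(\sigma^*(i,N))]\right),$$
and to evaluate the two sums separately using Knuth's lemma, anonymity of the core, and one order‑statistic identity. For the integrated term, the whole economy $(N,\succ)$ is a RHM of size $n$, so Knuth's lemma gives $\EE[\sum_{i\in N}\rk_i(x_i^*)]=(n+1)H_n-n$. Since the preferences are i.i.d.\ uniform and TTC (hence the core matching) is anonymous, $\EE[\rk_i(\sigma^*(i,N))]$ does not depend on $i$, so it equals $\frac{(n+1)H_n-n}{n}$; summing over the $n_j$ agents of $C_j$ and dividing by $n\,n_j$ produces exactly the second term of the claimed formula.

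For the segregated term, first note that $\widetilde\succ_{C_j}$ is itself distributed as a RHM on $n_j$ agents: a uniform random total order on the $n$ houses, restricted to the fixed $n_j$‑element set of $C_j$‑houses, is uniform over the $n_j!$ orders, and these restrictions are independent across $i\in C_j$ because the $\succ_i$ are. Hence Knuth's lemma applied to community $C_j$ yields $\EE\big[\sum_{i\in C_j}\widetilde\rk_i(\sigma^*(i,C))\big]=(n_j+1)H_{n_j}-n_j$, where $\widetilde\rk_i$ denotes the rank among the $n_j$ houses of $C_j$ only. The remaining step is to pass from these within‑community ranks to the full ranks among all $n$ houses used in the definition of $\gamma_i$. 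Condition on the entire restricted profile $\widetilde\succ_{C_j}$, which determines $\sigma^*(\cdot,C)$ and the within‑community rank $R:=\widetilde\rk_i(\sigma^*(i,C))$. Conditionally, $\succ_i$ is uniform among the full orders whose restriction to $C_j$ is the given one, so the set of positions occupied by the $n_j$ community houses inside $\succ_i$ is a uniform random $n_j$‑subset of $\{1,\dots,n\}$; the full rank of the $R$‑th best community house is then the $R$‑th order statistic of that subset, with conditional expectation $R\cdot\frac{n+1}{n_j+1}$ (the standard identity, obtained by writing the subset's gaps as an exchangeable composition of $n-n_j$ into $n_j+1$ nonnegative parts, each with mean $\frac{n-n_j}{n_j+1}$). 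Taking expectations gives $\EE[\rk_i(\sigma^*(i,C))]=\frac{n+1}{n_j+1}\,\EE[\widetilde\rk_i(\sigma^*(i,C))]$; summing over $i\in C_j$, using the value above, and dividing by $n\,n_j$ produces the first term of the formula.

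Combining the two evaluations gives $\EE[\overline\Gamma_{C_j}(\sigma^*)]=\frac{(n+1)[(n_j+1)H_{n_j}-n_j]}{n_j(n_j+1)n}-\frac{(n+1)H_n-n}{n^2}$, as claimed. I expect the one delicate point to be the conditioning argument in the segregated term: one must verify that $\sigma^*(\cdot,C)$ depends only on the relative orders $\widetilde\succ_{C_j}$, so that it is independent of how the $n-n_j$ outside houses are interleaved into $\succ_i$, which is precisely what licenses scaling the within‑community expected rank by the factor $\frac{n+1}{n_j+1}$. Everything else reduces to Knuth's lemma, the anonymity of TTC, and elementary algebra.
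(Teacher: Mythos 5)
Your proof is correct and follows essentially the same route as the paper's: both decompose $\EE[\overline\Gamma_{C_j}(\sigma^*)]$ into the integrated and segregated terms, apply Knuth's lemma to each, and convert the within-community rank to the full rank via the factor $\frac{n+1}{n_j+1}$ (your order-statistic identity is the same computation as the paper's ``gaps'' argument). If anything, you are slightly more careful than the paper in justifying that $\widetilde\succ_{C_j}$ is itself a uniform RHM and that $\sigma^*(\cdot,C)$ is measurable with respect to it, which is exactly what licenses the conditional rescaling.
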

	
	\begin{proof}
		For any $i, j  \in C_j$ and any community $C_j$, define the \emph{relative rank} of house $\omega_h$ in the preference order of agent $i$ by $\hat \rk_i(\omega_h) \coloneqq \left\vert \{ l \in C_j : \omega_l \succcurlyeq_i \omega_h \} \right\vert$. This is, the relative rank indicates the position of a house in an agent's preference ranking compared \emph{only} to houses owned by other agents belonging to the same community.
		Knuth's result directly implies that
		\begin{eqnarray}
			\EE [\sum_{i=1}^n \rk_i(\sigma^*(i,N))] &=& (n+1)H_n-n, \text{and}\\
			\label{eq:relative}	\EE[\sum_{i=1}^{n_j} \hat \rk_i(\sigma^*(i,C_j))] &=& (n_j+1)H_{n_j}-n_j, \, \forall j \in \{1, \ldots, k\}
		\end{eqnarray} 
		
		So that before integration,	agents are assigned to a house relatively ranked $(n_j+1)H_{n_j}-n_j$. To complete the proof, we need to figure out in which position is such house in the absolute rank of all houses (i.e. convert the relative rank into the full rank). To do so, suppose that a house assigned to an agent in a segregated allocation has a relative rank $q$. A randomly chosen house, belonging to an agent from another community, could be better ranked than house 1, between houses 1 and 2, ..., between
		houses $q-1$ and $q$, and so on. Therefore, a random house belonging to another
		agent is in any of those gaps with probability $\frac{1}{n_j+1}$ and thus has
		$\frac{q}{n_j + 1}$ chances of being more highly ranked than the house with relative ranking $q$. There are $(n-n_j)$ houses from other communities.
		On average, $\frac{q(n-n_j)}{n_j+1}$ houses will be better ranked. Furthermore, there were
		already $q$ houses in his own community ranked better than it. This implies
		that its expected ranking is $q+\frac{q(n-n_j)}{n_j+1}=\frac{q(n+1)}{n_j+1}$. Substituting $q$ for the expression obtained in equation \eqref{eq:relative}, we obtain
		\begin{eqnarray}
			\EE [\overline \Gamma_{C_j} (\sigma^*))] =  \frac{(n+1)[ (n_j+1)H_{n_j}-n_j]}{n_j(n_j+1)n}&-&\frac{(n+1)H_n - n}{n^2}
		\end{eqnarray}
	\end{proof}

	Proposition \ref{thm:avg-size}, which is interesting per se, provides valuable comparative statics, which we present in the following Corollary.
	
	\begin{corollary}
		The expected welfare gains from integration are positive for all agents, and higher for agents in smaller communities.
	\end{corollary}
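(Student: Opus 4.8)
The plan is to collapse the closed-form expression in Proposition~\ref{thm:avg-size} into a statement about the monotonicity of a single one-variable function. Setting $h(m) := \frac{H_m}{m} - \frac{1}{m+1}$, one checks by routine algebra — pulling the common factor $\frac{n+1}{n}$ out of each of the two terms and simplifying $\frac{(n_j+1)H_{n_j}-n_j}{n_j(n_j+1)}$ and $\frac{(n+1)H_n-n}{n(n+1)}$ — that
\[
\EE[\overline\Gamma_{C_j}(\sigma^*)] = \frac{n+1}{n}\bigl(h(n_j) - h(n)\bigr).
\]
Because $\frac{n+1}{n} > 0$ and every community satisfies $n_j \le n$, with $n_j < n$ whenever there is genuinely more than one community, both halves of the corollary follow from a single fact: $h$ is strictly decreasing on $\{1,2,3,\dots\}$. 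Positivity of the gains is the case $n_j < n$ of $h(n_j) > h(n)$; and ``larger in smaller communities'' is just the statement that $h(n_j)$, hence $\EE[\overline\Gamma_{C_j}(\sigma^*)]$, decreases as $n_j$ grows (with $n$ held fixed).

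It therefore remains to prove the monotonicity lemma, $h(m) > h(m+1)$ for all integers $m \ge 1$, which I would do by direct computation of the difference. Using $H_{m+1} = H_m + \frac{1}{m+1}$ and collecting the harmonic terms, $h(m) - h(m+1)$ simplifies to
\[
\frac{H_m}{m(m+1)} - \frac{1}{(m+1)(m+2)} - \frac{1}{(m+1)^2},
\]
so, after multiplying through by $m(m+1)>0$, positivity is equivalent to $H_m > \frac{m(2m+3)}{(m+1)(m+2)}$. The right-hand side is strictly less than $2$ for every $m \ge 1$ (since $2m^2+3m < 2m^2+6m+4$), while $H_4 = \tfrac{25}{12} > 2$; as $H_m$ is increasing, this settles all $m \ge 4$, and the three remaining cases $m=1,2,3$ are checked by hand (e.g.\ $H_3 = \tfrac{11}{6} > \tfrac{27}{20}$).

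The only place a naive argument slips is precisely here: the crude bound $H_m \ge 1$ does not dominate the right-hand side when $m = 2$ or $m = 3$, so one must invoke $H_m \ge H_4 > 2$ for $m \ge 4$ together with the explicit small cases — a minor nuisance rather than a real difficulty. I would close by recording the weak-versus-strict dichotomy left implicit in the statement: when $n_j = n$ (a single community, nothing to integrate) the formula gives exactly $0$, and the inequality is strict precisely for an agent whose community is a proper subset of $N$, which is the relevant case of a non-trivial integration.
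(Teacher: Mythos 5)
Your proposal is correct, and it supplies something the paper does not: the corollary is stated after Proposition~\ref{thm:avg-size} with no proof at all, the authors evidently regarding it as immediate from the closed-form expression. Your reduction is the right way to make it rigorous. The factorization $\EE[\overline\Gamma_{C_j}(\sigma^*)] = \frac{n+1}{n}\bigl(h(n_j)-h(n)\bigr)$ with $h(m)=\frac{H_m}{m}-\frac{1}{m+1}$ checks out (both terms of the Proposition~\ref{thm:avg-size} formula do collapse to $\frac{n+1}{n}h(\cdot)$), so both claims of the corollary indeed hinge on the single fact that $h$ is strictly decreasing on the positive integers. Your verification of that fact is also correct: the difference $h(m)-h(m+1)$ simplifies as you state, positivity is equivalent to $H_m>\frac{m(2m+3)}{(m+1)(m+2)}$, the right-hand side is bounded above by $2$ while $H_m\ge H_4=\frac{25}{12}>2$ for $m\ge 4$, and the cases $m=1,2,3$ hold by direct inspection. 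You are right to flag that this last step is not vacuous --- the monotonicity of $h$ is not obvious a priori, since $\frac{H_m}{m}$ decreases while $-\frac{1}{m+1}$ increases, and the crude bound $H_m\ge 1$ fails for $m=2,3$ --- so the corollary genuinely requires the short computation you give rather than being a one-line observation. Your closing remark on the weak-versus-strict dichotomy ($n_j=n$ giving exactly zero, strict positivity requiring $k\ge 2$) is a sensible precisification of the paper's informal phrasing.
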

	
	For example, if we merge three Shapley--Scarf markets of size 60, 30 and 10, the corresponding welfare gains in terms of house rank are 1.98, 5.95 and 16.16, i.e. agents from the market with only 10 agents improve the ranking of their assigned house by 16 positions, whereas those in the market with 60 agents only improve theirs by 2 positions. In percentile terms, agents from the smallest market improve the rank of their assigned house by 16\% of the length of their preference list, whereas agents from the largest market increase their corresponding rank only by 2\% of the length of their preference list. Our theoretical predictions match very accurately the gains from integration observed in simulated random markets. Averaging the results of a thousand random markets (with three markets each of sizes 60, 30 and 10), we obtain that the realized gains from integration are of 2.07, 5.93 and 16.12 (with standard deviations of 1.31, 2.27 and 5.82, respectively).\footnote{The corresponding code is available from \url{www.josueortega.com}.}
	
	We now turn to studying the expected number of agents who are harmed by integration in each community, i.e. $|N^-_{C_j}(\sigma^*)|$. To do so, we relate the number of trading cycles in TTC for the segregated markets to the number of agents harmed by integration via two auxiliary Lemmas. For any community $C_j$, let $t_j$ be the number of cycles obtained by TTC when computing the segregated core allocation $\sigma^*(\cdot,C_j)$, and let $t \coloneqq \sum_{j=1}^k t_j$.\footnote{To clarify, $t$ is the number of cycles, not of iterations. One iteration in TTC may generate more than one cycle.} 
		
		Our first auxiliary Lemma relates $|N^-_{C_j}(\sigma^*) |$ to $t_j$.
		\begin{lemma}
			\label{thm:tj}
			In any EHM, $|N^-_{C_j}(\sigma^*) | \leq n_j-t_j$.
		\end{lemma}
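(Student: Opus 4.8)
The plan is to prove the inequality cycle by cycle: I will show that each of the $t_j$ trading cycles produced by running TTC on the restricted market $(C_j,\widetilde\succ_{C_j})$ contains at least one agent who is \emph{not} harmed by integration. Since those cycles are pairwise disjoint and together exhaust $C_j$, this gives $|N^+_{C_j}(\sigma^*)|+|N^0_{C_j}(\sigma^*)|\geq t_j$, and hence $|N^-_{C_j}(\sigma^*)|=n_j-|N^+_{C_j}(\sigma^*)|-|N^0_{C_j}(\sigma^*)|\leq n_j-t_j$, which is exactly the claim.

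The only structural fact about TTC that I need is this: if $Z$ is any cycle formed at some iteration of TTC on $(C_j,\widetilde\succ_{C_j})$, then $\{\sigma^*(i,C):i\in Z\}=\{\omega_i:i\in Z\}$. This is immediate from the algorithm, since within a cycle every agent is permanently assigned the house of the agent he points to, and that agent again lies in $Z$; so the segregated allocation restricted to $Z$ is just a permutation of the endowments $\{\omega_i:i\in Z\}$. In the language of the core definition this says precisely that $Z$ is an effective coalition for the allocation $\sigma^*(\cdot,C)$, witnessed by the allocation $y$ defined by $y_i=\sigma^*(i,C)$ for $i\in Z$ and $y_i=\omega_i$ otherwise, so that $\{y_i:i\in Z\}=\{\omega_i:i\in Z\}$.

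The main step is then a one-line contradiction with the core property of the integrated allocation. Fix a cycle $Z$ and suppose, towards a contradiction, that every agent in $Z$ is harmed, i.e. $\sigma^*(i,C)\succ_i\sigma^*(i,N)$ for all $i\in Z$ (recall $\sigma^*(i,C)=\sigma^*(i,C_j)$ for $i\in C_j$). Then the allocation $y$ from the previous paragraph satisfies $\{y_i:i\in Z\}=\{\omega_i:i\in Z\}$ and $y_i=\sigma^*(i,C)\succ_i\sigma^*(i,N)$ for every $i\in Z$, so the coalition $Z$ blocks $\sigma^*(\cdot,N)$. This contradicts the fact that $\sigma^*(\cdot,N)$ is the (unique) core allocation of $(N,\succ)$. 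Hence some $i\in Z$ has $\sigma^*(i,N)\succcurlyeq_i\sigma^*(i,C)$, i.e. $i\notin N^-_{C_j}(\sigma^*)$; choosing one such agent from each of the $t_j$ disjoint cycles finishes the proof.

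I do not expect a real obstacle here, since the argument reduces to the observation that a TTC cycle is automatically self-contained and therefore an effective coalition for the segregated allocation. The only point requiring care is the bookkeeping between the restricted preference $\widetilde\succ_i$, which is used solely to \emph{define} $\sigma^*(\cdot,C_j)$, and the full preference $\succ_i$, which is what appears both in the definition of $N^-_{C_j}(\sigma^*)$ and in the blocking condition for $\sigma^*(\cdot,N)$; once the segregated allocation is fixed, all comparisons are with respect to $\succ_i$, and the argument above is internally consistent.
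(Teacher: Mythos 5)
Your proof is correct and follows the same basic strategy as the paper's: decompose $C_j$ into the $t_j$ disjoint TTC cycles of the segregated market, observe that each cycle is an effective coalition (its segregated assignment permutes its own endowments), and extract one non-harmed agent per cycle. The one difference worth noting is that you prove only what the lemma actually needs --- that not every agent in a cycle can lie in $N^-_{C_j}(\sigma^*)$, which follows in one line from the core property of $\sigma^*(\cdot,N)$ --- whereas the paper establishes the strictly stronger per-cycle dichotomy that either all agents in the cycle are in $N^0_{C_j}(\sigma^*)$ or at least one is in $N^+_{C_j}(\sigma^*)$. That stronger claim forces the paper to handle a second case (a cycle mixing $N^-$ and $N^0$ agents with no $N^+$ agent) via a separate argument about the dynamics of TTC in the integrated market; your formulation sidesteps that case entirely, since a cycle consisting of harmed and unaffected agents already contains a non-harmed agent. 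The result is a shorter and cleaner proof of the inequality as stated, at the cost of not delivering the finer structural fact the paper records (which, in any case, is also not needed where the paper later reuses the argument in Proposition 5).
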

		
		\begin{proof}
			In any cycle obtained by TTC when computing the segregated core allocation $\sigma^*(\cdot,C_j)$, we must either have that all agents in the cycle are in $N^0_{C_j}(\sigma^*)$ or that at least one agent is in $N^+_{C_j}(\sigma^*)$. Otherwise there is a cycle (involving a set of agents $S$) with at least one agent in $N^-_{C_j}(\sigma^*)$ and with no agent in $N^+_{C_j}(\sigma^*)$. Such a combination cannot occur. If all agents in the cycle are in $N^-_{C_j}(\sigma^*)$, then those agents are clearly a blocking coalition to the integrated core allocation. If some agents are in $N^-_{C_j}(\sigma^*)$ and some in $N^0_{C_j}(\sigma^*)$, then when we run TTC to find the integrated core allocation, there is an agent $i \in N^-_{C_j}(\sigma^*)$ who is pointed by an agent $h \in N^0_{C_j}(\sigma^*)$, i.e. $h$'s assignment does not change (it is $\omega_i$ before and after integration) but the one of $i$ becomes worse. But when we run TTC, $i$ points to the agent owning the best house available. Now, if $\sigma^*(i,C)$ is no longer available, it means that its owner exited in an earlier cycle during TTC, and thus she must have received a better house, and thus there is an agent in $N^+_{C_j}(\sigma^*)$, a contradiction.
		\end{proof}
		
		Our second auxiliary lemma computes the expected number of cycles in random housing markets. It appears as Theorem 2 in \cite{frieze1995probabilistic}. Let $t'$ denote the number of cycles formed during the execution of TTC in a RHM with $n'$ agents. Then,
		
		\begin{lemma}[\citealp{frieze1995probabilistic}]
			\label{lemma:frieze}
			$\EE[t']=\sqrt{2 \pi n'}+ O(\log n')$.
		\end{lemma}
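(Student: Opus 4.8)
Since Lemma~\ref{lemma:frieze} is due to \cite{frieze1995probabilistic}, the plan is to reconstruct their argument. The first move I would make is to re-encode the core allocation as the permutation $\pi$ of $[n']$ with $x^*_i=\omega_{\pi(i)}$. A standard fact is that the collection of cycles TTC removes coincides with the cycle decomposition of $\pi$: at the first round at which some agent of a given $\pi$-cycle leaves, that agent points to the owner of its final house, i.e.\ to the next agent on the $\pi$-cycle, who is therefore still present, and tracing pointers around the cycle shows all its agents leave simultaneously as precisely that cycle. Hence $t'$ equals the number of cycles of $\pi$, and grouping potential cycles by length,
\[
\EE[t']=\sum_{\ell=1}^{n'}\frac{n'!}{\ell\,(n'-\ell)!}\;q_\ell,\qquad q_\ell:=\PP\big(\pi(i_1)=i_2,\dots,\pi(i_\ell)=i_1\big),
\]
where, by the symmetry of the random model, $q_\ell$ does not depend on the choice of distinct $i_1,\dots,i_\ell$, and $\tfrac{n'!}{\ell(n'-\ell)!}$ is the number of $\ell$-cycles on $[n']$.

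The task is thus to compute $q_\ell$, which I would do by running TTC under deferred decisions. In the first round every agent points to the owner of a uniformly random house, so the round-one pointing graph is a uniform random mapping on $[n']$; in particular the fixed cycle $(i_1,\dots,i_\ell)$ is realized in round one with probability exactly $(1/n')^\ell$, and the round-one cycle statistics are those of a uniform random mapping. For cycles realized later one removes the round-one cyclic vertices and iterates; the delicate point is to pin down the conditional law of the residual market. Conditioned on the survivor set, each survivor either retains the pointer (within the survivor set) it chose in round one — when its target survived — or must pick a fresh one when its target was removed, and one has to identify precisely the joint distribution of the retained and freshly drawn pointers. Carrying this through yields a recursion — equivalently, a functional equation for the probability generating function of the cycle-length distribution — whose solution expresses $q_\ell$ in closed form as a ratio of factorials. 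Substituting that form back into the displayed sum and estimating it with Stirling's formula and harmonic-number asymptotics produces the leading term $\sqrt{2\pi n'}$; the few long cycles created in the first rounds — there are only $O(\log n')$ of them, as for any random mapping — account for the error term.

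The main obstacle is exactly the residual-market step: after the first round the remaining market is \emph{not} a fresh uniform random housing market, since surviving agents keep the pointers whose targets survived, so those retained pointers are correlated with the survivor set. One must show that, conditioned on the survivor set (and the history of removed cycles), the retained pointers together with the re-drawn ones still form a structure tractable enough — essentially a random forest grafted onto fresh uniform re-pointings — for the recursion to close cleanly. Once that conditional description is in place, the rest is a nontrivial but mechanical generating-function manipulation followed by an asymptotic expansion.
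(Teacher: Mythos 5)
The paper does not actually prove this lemma: it is imported verbatim as Theorem~2 of \cite{frieze1995probabilistic}, so you are reconstructing a substantial external result rather than an argument contained in the paper. Your combinatorial setup is correct: each TTC cycle assigns to every member the house of the agent it points to, so the TTC cycles are exactly the cycles of the core permutation $\pi$ and $t'$ counts them; the factor $\frac{n'!}{\ell\,(n'-\ell)!}$ correctly counts the potential $\ell$-cycles; and the round-one pointing digraph is indeed a uniform random mapping on $[n']$.

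The gap is that everything which makes the theorem true is still missing. You never determine the conditional law of the residual digraph after a round of removals, never write down (let alone solve) the recursion or functional equation for $q_\ell$, and never exhibit the claimed ``ratio of factorials'' whose Stirling asymptotics would produce the constant $\sqrt{2\pi}$; these steps are deferred to ``carrying this through'' and ``mechanical manipulation.'' Worse, the one stage you do analyze rigorously contributes only the error term of the statement, not the main term: in a uniform random mapping on $[m]$ the cyclic part has $\sim\sqrt{\pi m/2}$ vertices forming only $\sim\tfrac12\log m$ cycles, so round one removes $\Theta\!\left(\sqrt{n'}\right)$ agents while creating just $O(\log n')$ cycles. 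Hence essentially the entire leading term $\sqrt{2\pi n'}$ must accumulate over the later rounds --- precisely the rounds whose distribution (a retained forest on the survivors grafted onto fresh re-pointings, correlated with the survivor set) you explicitly acknowledge you have not pinned down. Your closing remark that the $O(\log n')$ correction comes from ``the few long cycles created in the first rounds'' therefore has the logic backwards. As written the proposal is a plan, not a proof; if you need this statement, cite Frieze and Pittel (1995) as the paper does, or supply the full conditional analysis of the residual market and the resulting asymptotics.
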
 
		
		Note that Lemma \ref{lemma:frieze} implies that, in a REHM:
		\begin{eqnarray}
			\EE[t_j]&=&\sqrt{2 \pi n_j}+ O(\log n_j) \label{eqn:frieze}
		\end{eqnarray}
		
		Combining Lemmas \ref{thm:tj} and \ref{lemma:frieze}, we obtain an upper bound on the expected number of agents harmed by integration in each community. Proposition \ref{thm:avg-number}	 below presents this upper bound.
		
			\begin{proposition}
			\label{thm:avg-number}
			$\EE[|N^-_{C_j}(\sigma^*)|] \leq n_j-\sqrt{2\pi n_j}  - O(\log  n_j) 	$.
		\end{proposition}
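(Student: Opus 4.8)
The plan is to combine the two auxiliary lemmas directly, with expectation taken over the randomness of the REHM. First I would note that Lemma \ref{thm:tj} is a pathwise statement: for \emph{every} realization of the preference profile, the segregated core allocation $\sigma^*(\cdot,C_j)$ is produced by TTC with some number of cycles $t_j$, and the inequality $|N^-_{C_j}(\sigma^*)| \leq n_j - t_j$ holds surely. Since $|N^-_{C_j}(\sigma^*)|$ is a bounded nonnegative integer random variable, monotonicity and linearity of expectation give
\[
\EE[|N^-_{C_j}(\sigma^*)|] \leq \EE[n_j - t_j] = n_j - \EE[t_j].
\]

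The one point that needs a short justification is why Lemma \ref{lemma:frieze} may be applied to $t_j$. By construction, $t_j$ is the number of cycles produced by TTC run on the housing market $(C_j,\widetilde\succ_{C_j})$, which depends only on the preferences of the $n_j$ agents in $C_j$ restricted to the $n_j$ houses owned by members of $C_j$. In a REHM each agent's full preference list is a uniform random ordering of all $n$ houses, drawn independently across agents, and the order induced on any fixed subset of $n_j$ houses is again a uniform random ordering of that subset (the marginal of a uniform random permutation on a fixed subset is uniform), still independent across agents. Hence $(C_j,\widetilde\succ_{C_j})$ is distributed exactly as a RHM on $n_j$ agents, so equation \eqref{eqn:frieze} applies and $\EE[t_j] = \sqrt{2\pi n_j} + O(\log n_j)$. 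Substituting into the bound above yields
\[
\EE[|N^-_{C_j}(\sigma^*)|] \leq n_j - \EE[t_j] = n_j - \sqrt{2\pi n_j} - O(\log n_j),
\]
which is the claimed inequality; summing over $j \in \{1,\ldots,k\}$ then gives the economy-wide bound mentioned in the introduction.

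I do not expect any real obstacle here: the substantive work is already carried by Lemma \ref{thm:tj} and Lemma \ref{lemma:frieze}, and the only thing to verify is the distributional remark that restricting a uniform random preference to a fixed subset of houses preserves uniformity and independence, which is immediate. The only place to be slightly careful is bookkeeping of the error term: the $O(\log n_j)$ enters as an additive term in $\EE[t_j]$ and is therefore subtracted in the final bound, consistent with the statement as written.
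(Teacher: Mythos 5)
Your proposal is correct and follows essentially the same route as the paper: apply Lemma \ref{thm:tj} pathwise, take expectations, and substitute the Frieze--Pittel value of $\EE[t_j]$ from equation \eqref{eqn:frieze}. Your additional remark that the restriction of a uniform random preference to the houses of $C_j$ is again uniform and independent across agents is a point the paper leaves implicit, but it is the right justification and does not change the argument.
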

		
		\begin{proof}
		Substituting $t_j$ in Lemma \ref{lemma:frieze} for its value in equation \eqref{eqn:frieze} together, we directly obtain the proof of our result. 
	\end{proof}
	
	Proposition \ref{thm:avg-number} provides, as a Corollary, a bound on the expected total number of agents in the whole economy that are harmed by integration.
	
	\begin{corollary}
		$\EE[N^-(\sigma^*)] \leq n-\sqrt{2\pi} (\sum_{j=1}^k \sqrt{n_j}) - O(\log \prod_{j=1}^k n_j)$.
	\end{corollary}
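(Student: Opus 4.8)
The plan is to derive the whole-economy bound directly from the per-community bound of Proposition~\ref{thm:avg-number}, using only that $C = \{C_1, \ldots, C_k\}$ is a partition of $N$. First I would observe that an agent $i \in C_j$ belongs to $N^-(\sigma^*)$ precisely when $i \in N^-_{C_j}(\sigma^*)$, since $\sigma^*(i,C)$ depends only on $i$'s own community; hence $N^-(\sigma^*)$ is the disjoint union $\bigsqcup_{j=1}^k N^-_{C_j}(\sigma^*)$. Therefore, for every realization of preferences, $|N^-(\sigma^*)| = \sum_{j=1}^k |N^-_{C_j}(\sigma^*)|$.

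Taking expectations and applying linearity gives $\EE[|N^-(\sigma^*)|] = \sum_{j=1}^k \EE[|N^-_{C_j}(\sigma^*)|]$. I would then bound each term on the right by Proposition~\ref{thm:avg-number}, obtaining $\EE[|N^-(\sigma^*)|] \le \sum_{j=1}^k \bigl( n_j - \sqrt{2\pi n_j} - O(\log n_j) \bigr)$. Since $\sum_{j=1}^k n_j = n$ in any REHM, the leading term collapses to $n$, while the square-root terms combine to $\sqrt{2\pi}\sum_{j=1}^k \sqrt{n_j}$.

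It remains to consolidate the $k$ error terms. Using $\log\bigl(\prod_{j=1}^k n_j\bigr) = \sum_{j=1}^k \log n_j$, the sum $\sum_{j=1}^k O(\log n_j)$ is $O\bigl(\log \prod_{j=1}^k n_j\bigr)$; this step is legitimate because $k$ is fixed and the implied constants in Lemma~\ref{lemma:frieze} are uniform in the market size. Assembling the pieces yields $\EE[|N^-(\sigma^*)|] \le n - \sqrt{2\pi}\sum_{j=1}^k \sqrt{n_j} - O\bigl(\log \prod_{j=1}^k n_j\bigr)$, which is the claim. There is no substantive obstacle here: the only points worth stating explicitly are that the sets $N^-_{C_j}(\sigma^*)$ are genuinely disjoint — immediate from $C$ being a partition — and that the $k$ separate asymptotic terms may be merged into one, as just noted.
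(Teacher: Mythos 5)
Your proof is correct and follows the only natural route, which is also the one the paper implicitly takes (it states this as an immediate corollary of Proposition \ref{thm:avg-number} without writing out the details): decompose $N^-(\sigma^*)$ as the disjoint union of the sets $N^-_{C_j}(\sigma^*)$ over the partition, apply linearity of expectation, sum the per-community bounds using $\sum_j n_j = n$, and merge the error terms via $\sum_j \log n_j = \log \prod_j n_j$. Your explicit attention to the disjointness of the $N^-_{C_j}(\sigma^*)$ and to the consolidation of the $k$ asymptotic terms is a welcome tightening of what the paper leaves unsaid, but it is the same argument.
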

	
	Proposition \ref{thm:avg-number} is our only bound that is not tight, but is nevertheless informative. Returning to our example of a EHM divided into three communities of sizes 60, 30 and 10, Proposition \ref{thm:avg-number} tells us that, on average, the TTC algorithm generates around 30 trading cycles when computing the integrated core allocation. In each of those cycles, at least one person is not harmed by integration. Consequently, at most 70 agents can be harmed by integration. But in fact Proposition \ref{thm:avg-number} says more: it tells us the distribution of agents harmed by integration across communities. Thus, in the market of size 60, the expected number of agents harmed by integration is smaller than 44. Similarly, for the markets of size 30 and 10, the expected number of agents harmed by integration is smaller than 19 and 4, respectively. 
	
	Another Corollary of Proposition \ref{thm:avg-number} is that, whenever all communities have the same number of agents $n_1$, market integration never harms more than half of the total population if $n_1$ is sufficiently small. 
	
	\begin{corollary}
		If $n_1=\ldots=n_k$, then $\EE[|N^-(\sigma^*)|] \leq \frac{n}{2}$ if $n_1\leq 8\pi \approx 25.13$.
	\end{corollary}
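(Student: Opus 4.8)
The plan is to specialize the Corollary to Proposition~\ref{thm:avg-number} to the case of equal community sizes and then reduce the claim to an elementary one-variable inequality. First I would set $n=kn_1$ and substitute $n_j=n_1$ for every $j$ into the bound
\begin{equation}
\EE[|N^-(\sigma^*)|] \;\le\; n-\sqrt{2\pi}\Bigl(\sum_{j=1}^k \sqrt{n_j}\Bigr) - O\Bigl(\log \prod_{j=1}^k n_j\Bigr),
\end{equation}
which collapses to
\begin{equation}
\EE[|N^-(\sigma^*)|] \;\le\; kn_1 - k\sqrt{2\pi n_1} - O(k\log n_1) \;=\; k\bigl(n_1-\sqrt{2\pi n_1}\bigr) - O(k\log n_1).
\end{equation}

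Second, I would note that the desired conclusion $\EE[|N^-(\sigma^*)|]\le n/2 = kn_1/2$ holds as soon as $n_1-\sqrt{2\pi n_1}\le n_1/2$. Rearranging this scalar inequality gives $n_1/2 \le \sqrt{2\pi n_1}$, and squaring both sides (both nonnegative) yields $n_1^2/4 \le 2\pi n_1$, i.e. $n_1 \le 8\pi \approx 25.13$, which is exactly the stated threshold. So the heart of the argument is just solving $n_1-\sqrt{2\pi n_1}\le n_1/2$.

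The only point requiring care is the bookkeeping of the $O(\log)$ error term inherited from the Frieze--Grimmett estimate (Lemma~\ref{lemma:frieze}): one must confirm that for $n_1$ in the relevant range the correction $O(k\log n_1)$ does not raise the bound above $kn_1/2$. This is where I expect the mild friction, since Proposition~\ref{thm:avg-number} is our one non-tight bound; concretely, one observes that for $n_1 < 8\pi$ the leading gap $g(n_1)\coloneqq n_1-\sqrt{2\pi n_1}-n_1/2$ is strictly negative, so $k\,g(n_1)$ dominates the $O(k\log n_1)$ term for all $n_1$ bounded away from $8\pi$, and the boundary value $n_1=8\pi$ (non-integer) is never attained; hence the inequality $\EE[|N^-(\sigma^*)|]\le n/2$ is valid whenever $n_1 \le 8\pi$. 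Apart from this, the corollary is an immediate consequence of the preceding results.
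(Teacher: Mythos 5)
Your proof is correct and follows essentially the same route as the paper: substitute $n_j=n_1$ into the bound from the Corollary to Proposition~\ref{thm:avg-number}, reduce to the scalar inequality $n_1-\sqrt{2\pi n_1}\le n_1/2$, and solve to obtain $n_1\le 8\pi$. Your extra discussion of the $O(\log)$ correction is slightly more careful than the paper, which simply drops that term and notes the condition is satisfied ``in particular'' when $n_1\le 2\sqrt{2\pi n_1}$, but this does not change the argument.
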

	
	\begin{proof} From Corollary 2, we have that:
		\begin{eqnarray}
			\EE[N^-(\sigma^*)] &=&kn_1 - k\sqrt{2\pi n_1}- O(\log n_1^k)
		\end{eqnarray}
		
		and therefore $\EE[N^-(\sigma^*)]$ is less than $n/2$ when
		\begin{eqnarray}
			k (n_1 - \sqrt{2 \pi n_1}) - O(\log n_1^k)&\leq& \frac{kn_1}{2}\\
			n_1 &\leq& 2[\sqrt{2 \pi n_1} + O(\log n_1^k)/k]
		\end{eqnarray}
		In particular, condition (13) is satisfied whenever:
		\begin{eqnarray}
			n_1 &\leq& 2[\sqrt{2 \pi n_1}] = 8\pi \approx 25.13
		\end{eqnarray}
	\end{proof}
	
	For several sensible combinations of parameters, we never observe that the number of agents harmed by integration was over 50\%. The fraction of agents harmed by integration was between 14\% to 22\%, and becomes smaller as $k$ increases and as $n$ decreases (see Table \ref{tab:expranking}). The intuition behind these changes is that as $k$ grows, integrations offers more opportunities for trade; whereas when $n$ grows (and $k$ remains constant) the probability that the integrated and segregated matchings are the same becomes smaller, and therefore more agents benefit and are harmed by market integration (because fewer agents are unaffected by integration). Our simulations shows that the bound in Proposition \ref{thm:avg-number} regarding the number of agents harmed by integration can be improved. We leave this interesting question for future research.
	\renewcommand{\tablename}{Table} 
	\begin{table}[ht]
		\centering
		\caption[caption,justification=centering]{Fraction of agents affected by integration. \hspace{\textwidth}\scriptsize	 Average over a thousand simulations with preferences drawn uniformly at random. Standard errors in parenthesis.}
		\label{tab:expranking}
		
		\begin{tabular}{lrrrrrr}
			\toprule
			$n$&		\multicolumn{6}{c}{$k$}\\
			&	\multicolumn{2}{c}{2}   & 	\multicolumn{2}{c}{3}    & 	\multicolumn{2}{c}{5}        \\
			\cmidrule(r){2-3} \cmidrule(r){4-5} \cmidrule(r){6-7}
			& Benefit&Harmed    & Benefit&Harmed    & Benefit&Harmed    \\
			\cmidrule(r){2-3} \cmidrule(r){4-5} \cmidrule(r){6-7}
			25        & 53.52 & 19.95 & 64.68 & 17.63 & 75.05     & 14.08     \\
			& \footnotesize (6.8)   & \footnotesize (4.66)  & \footnotesize (5)     & \footnotesize (3.54)  & \footnotesize (3.59)      & \footnotesize (2.55)      \\
			50        & 54.64 & 21.45 & 65.57 & 18.47 & 75.54     & 14.67     \\
			& \footnotesize (4.47)  & \footnotesize (3.47)  & \footnotesize (3.52)  & \footnotesize (2.66)  & \footnotesize (2.41)      & \footnotesize (1.84)      \\
			100       & 55.4  & 22.17 & 66.06 & 18.99 & 75.88 & 14.88 \\
			\textbf{} & \footnotesize (3.28)  & \footnotesize (2.34)  & \footnotesize (2.42)  & \footnotesize (1.8)   &  \footnotesize (1.78)         & \footnotesize (1.34)\\
			\bottomrule
		\end{tabular}
	\end{table}
	
	\paragraph{Other Interpretations} As discussed in the related literature section, the core from random endowments is equivalent to the allocation obtained with random serial dictatorship in a market with no property rights, i.e. assigning a random order among agents and letting them choose their most preferred object that remains available according to such order \citep{knuth1996exact,abdulkadirouglu1998random}. Therefore, the results obtained for random markets in this section also apply to the integration of markets with no endowments in which random serial dictatorship is used.

	\section{Specific Preference Domains}
	\label{sec:domains}
	
	Although uniform and independent preferences are the most natural and simple preferences to consider in random markets, it is well-known that in real-life applications such as kidney exchange, agents' preferences are strongly correlated, with some ``houses'' being particularly desired by most agents. In this section, we show that if preferences satisfy a particular type of correlation structure, we can guarantee that no more than half of the total population of agents is harmed by integration.
	
	To do so, let $q(r,\widetilde \succ_{C_j})$ be the set of agents in community $C_j$ placed at rank $r$ by any agent in their own community (including themselves) in preference profile $\widetilde \succ_{C_j}$. This is, for any positive integer $r$ and any $j \in \{1, \ldots, k\}$, $q(r,\widetilde \succ_{C_j}) \coloneqq \{ i \in C_j : \exists h \in C_j : \rk_j(\omega_i)=r \}$. Similarly, let $Q(r,\widetilde \succ_{C_j}) \coloneqq \bigcup\limits_{t=1}^{r} {q(t,\widetilde \succ_{C_j})}$ be the set of agents in community $j$ placed at rank $r$ and above. 
	
	Now we introduce the property that will ensure that market integration does not harm a majority of agents, which we call \emph{sequential dual dictator}. This property was recently introduced by \cite{troyan2019obviously} in a two-sided extension of a Shapley--Scarf market, which he used to characterize the obvious strategy-proof implementation of TTC.
	
	\begin{definition}[Sequential dual dictator property]
		A preference profile $\succ$ satisfies the sequential dual dictator property if, for any positive integer $r$ and $\forall j \in \{1, \ldots, k\}$, each of their corresponding preference restriction $\succ_{C_j}$ satisfies
		$$  |Q(r,\widetilde \succ_{C_j})|  \leq r+1 $$
	\end{definition} 
	
	In Example \ref{tab:ex2}, we show that the preference profile in Example \ref{tab:ex1} does not satisfy the sequential dual dictator property and provide a preference profile that does. In Example \ref{tab:ex1}, $|Q(1,\widetilde \succ_{C_1})|=|\{b,c,a\}|>2$, violating the sequential  dual dictator property. Similarly, $|Q(1,\widetilde \succ_{C_2})|=|\{e,f,g,d\}|>2$. In contrast, in the profile on the right in Example \ref{tab:ex2},  $|Q(1,\widetilde \succ_{C_1})|=|\{c,a\}|\leq 2$, $|Q(1,\widetilde \succ_{C_2})|=|\{e,f\}| \leq 2$ and $|Q(2,\widetilde \succ_{C_2})|=|\{e,f,d\}|\leq 3$. Whenever preferences satisfy the sequential dual dictator property, we can guarantee that no more than half of the agents in each community are harmed by integration. Note that, in contrast to Proposition \ref{thm:avg-number}, here we bound the number of agents harmed by integration in every EHM, instead of the expected number of agents harmed by integration across all REHMs.
	\renewcommand{\tablename}{Example}
	\begin{table}[!htbp]
		
		\begin{center}
			\caption[caption,justification=centering]{The preference profile on the right satisfies the sequential dual dictator property, unlike the one on the left.}
			\label{tab:ex2}
			\parbox{.45\linewidth}{
				\centering
				\begin{tabular}{ccc|cccc}
					
					a & b & c & d & e & f & g \\ 
					\hline
					b & c & a 	& e & f 	& g 	& d \\ 
					\vdots &  	\vdots		&  	\vdots		& \vdots & 		\vdots		& 		\vdots		& \vdots \\ 
					&  			&  	&  				&  &  &   \\ 
					&  			&  &  &  &  &  \\ 
				\end{tabular} 
			}\hfill
			\parbox{.45\linewidth}{
				\centering
				\begin{tabular}{ccc|cccc}
					a & b & c & d & e & f & g \\ 
					\hline
					c & c & a & e & f & f & e \\ 			 
					b & b & c  & f & d & e & d \\
					a & c & b & d & g & g & f  \\ 
					&  &   & g & e & d & g  \\ 
				\end{tabular} 
			}
		\end{center}
	\end{table}

	\begin{proposition}
		\label{thm:domain}
		If $\succ$ satisfies the sequential dual dictator property, then $|N^-_{C_j}(\sigma^*)|\leq \frac{n_j}{2}$.
	\end{proposition}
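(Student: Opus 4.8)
The plan is to bound the number of trading cycles from below and then invoke Lemma \ref{thm:tj}. Since that lemma gives $|N^-_{C_j}(\sigma^*)|\leq n_j-t_j$, it suffices to show that, when $\succ$ satisfies the sequential dual dictator property, running TTC on the segregated market $(C_j,\widetilde\succ_{C_j})$ produces at least $n_j/2$ cycles, i.e.\ $t_j\geq n_j/2$. I would obtain this by proving that every \emph{iteration} of TTC (one pass of steps 1--2, which may remove several cycles simultaneously) assigns at most two agents in total: since the $n_j$ agents of $C_j$ are partitioned among the iterations with at most two per iteration, there are at least $n_j/2$ iterations, each containing at least one cycle, so $t_j\geq n_j/2$.

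To prove the per-iteration claim, fix an iteration and let $\rho$ be the number of agents of $C_j$ assigned in earlier iterations, so the remaining set $R\subseteq C_j$ has $|R|=n_j-\rho$. The crux is that $Q(\rho+1,\widetilde\succ_{C_j})$ already contains \emph{all} $\rho$ previously-assigned agents together with \emph{every} agent assigned in the current iteration, and these two groups are disjoint. If $i$ is assigned in the current iteration, then $i$ is pointed to by some $h\in R$ for whom $\omega_i$ is the top remaining house, so every house $h$ prefers to $\omega_i$ is owned by a previously-assigned agent, of which there are only $\rho$; hence $h$ ranks $\omega_i$ within its top $\rho+1$ houses of $C_j$, i.e.\ $i\in Q(\rho+1,\widetilde\succ_{C_j})$. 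If instead $u$ was assigned in an earlier iteration, at which (say) $\rho'<\rho$ agents had already been assigned, the same reasoning gives $u\in Q(\rho'+1,\widetilde\succ_{C_j})\subseteq Q(\rho+1,\widetilde\succ_{C_j})$, using monotonicity of $Q$ in its first argument and $\rho'+1\leq\rho$ (every iteration removes at least one agent). Since $|Q(\rho+1,\widetilde\succ_{C_j})|\leq\rho+2$ by the sequential dual dictator property, at most two agents are assigned in the current iteration. Feeding $t_j\geq n_j/2$ into Lemma \ref{thm:tj} then yields $|N^-_{C_j}(\sigma^*)|\leq n_j-t_j\leq n_j/2$; the first iteration ($\rho=0$) is just the statement $|Q(1,\widetilde\succ_{C_j})|\leq 2$, and odd $n_j$ causes no trouble since then $t_j\geq(n_j+1)/2$.

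The step I expect to be the real obstacle is the handling of the already-assigned agents. The tempting route is to argue that the submarket on $R$ again satisfies the sequential dual dictator property and conclude by induction, but this is false: restricting preferences to $R$ can lift a house's rank by as much as $\rho$ positions, so one only gets $|Q(r,\widetilde\succ_R)|\leq r+\rho+1$, which is too weak to rule out long cycles late in the algorithm. The resolution is to never pass to the submarket and instead keep measuring everything against $Q(\cdot,\widetilde\succ_{C_j})$ for the full community, exploiting that the agents removed so far are themselves pinned near the top of $Q(\cdot,\widetilde\succ_{C_j})$ and therefore consume all but two units of the slack the property allows at threshold $\rho+1$.
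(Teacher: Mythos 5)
Your proof is correct and follows essentially the same route as the paper: you lower-bound the number of TTC cycles in the segregated market by $n_j/2$ using the sequential dual dictator property and then apply Lemma \ref{thm:tj}, which is exactly what the paper does (it phrases the bound as ``at most two vertices have positive in-degree in each iteration, so every cycle has length at most two''). Your explicit accounting with $Q(\rho+1,\widetilde\succ_{C_j})$ --- showing the $\rho$ already-removed agents exhaust all but two units of the slack $|Q(\rho+1,\widetilde\succ_{C_j})|\leq\rho+2$ --- is a more careful justification of the step the paper states only tersely, and your remark that one cannot simply pass to the restricted submarket is apt.
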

	
	\begin{proof}
		To complete the proof, we examine the number and length of trading cycles generated by the TTC algorithm when computing the segregated core allocation $\sigma^*(\cdot, C_j)$ for community $C_j$. At the first iteration, all agents point to the owner of their most preferred house, and if the sequential dual dictator property is satisfied, there are only two vertices with a positive in-degree. A trading cycle is created, either of those agents pointing to themselves or pointing at each other, and therefore each cycle created in the first iteration of TTC has length at most 2. In the second iteration, at most two agents have positive in-degree (because at least one agent was removed in the first iteration). Either one or two cycles are formed in iteration 2, and they have length of at most 2. The argument repeats for each iteration: each trading cycle has length at most 2.
		
		Now we invoke an argument that we used in the proof of Lemma 5, showing that in any cycle, we must either have that all agents are in $N^0(\sigma^*)$ or that at least one agent is in $N^+_{C_j}(\sigma^*)$. We have showed that there are at least $n_j/2$ cycles in each community. Therefore, $|N^-_{C_j}(\sigma^*)|\leq \frac{n_j}{2}$. 
	\end{proof}
	
	One particular case of preference profiles satisfying the sequential dual dictator property are those in which all agents have the same preferences. Such preferences has been extensively studied in Gale--Shapley marriage markets because they guarantee the uniqueness of the core allocation and ensure that truth-telling is a Nash equilibrium of the revelation game induced by any stable mechanism \citep{gusfield1989stable}. The sequential dual dictator preference domain is  larger than this classical domain of equal preferences. The sequential dual dictatorship only imposes a particular structure on the preferences of each community over its own houses and is therefore substantially less restrictive than identical preferences.
	
	\section{Conclusion}
	\label{sec:concl}
	
Market integration leads to more efficient outcomes and yet real-life offers plenty of examples of markets that fail to integrate and operate disjointly. In this paper, we have provided results that shed light on why this might be the case for a specific type of markets in which monetary transfers are not permitted.
	
Our explanation lies in the fact that market integration may have negative consequences for most traders. These negative consequences are so dire that they vastly outweigh the welfare benefits of those who become better off with market integration. Somewhat surprisingly, the average effect on the economy can be so bad that the average trader ends up with an allocation in the lower half of their preference list.

These negative consequences of market integration, however, are the exception rather than the rule, as we have shown formally. Two interesting open problems for further research are: i) to obtain conditions that fully characterize which types of Shapley--Scarf markets benefit from integration, and more generally, ii) to provide a comprehensive and unified discussion of the institutional features that prevent markets from integrating.
	
	\section*{Acknowledgements}
	We are grateful to Debasis Mishra, Herv\'e Moulin, Arunava Sen and the anonymous referees of this journal and the 13th Symposium on Algorithmic Game Theory (where this paper appeared as a one-page abstract) for their helpful comments. Sarah Fox proofread this paper. This work was completed while Kriti Manocha was a visiting student at Queen's University Belfast; she gratefully acknowledges the University for their hospitality. 
	Josu\'e Ortega is partially supported by the UK Economic and Social Research Council, grant R1379QMs.
	Rajnish Kumar acknowledges financial support from British Council grant UGC-UKIERI 2016-17-059.
	\setlength{\bibsep}{0cm}
	\bibliographystyle{ecta}

\begin{thebibliography}{35}
	\newcommand{\enquote}[1]{``#1''}
	\expandafter\ifx\csname natexlab\endcsname\relax\def\natexlab#1{#1}\fi
	
	\bibitem[\protect\citeauthoryear{Abdulkadiro{\u{g}}lu and
		S{\"o}nmez}{Abdulkadiro{\u{g}}lu and
		S{\"o}nmez}{1998}]{abdulkadirouglu1998random}
	\textsc{Abdulkadiro{\u{g}}lu, A. and T.~S{\"o}nmez} (1998): \enquote{Random
		serial dictatorship and the core from random endowments in house allocation
		problems,} \emph{Econometrica}, 66, 689--701.
	
	\bibitem[\protect\citeauthoryear{Abdulkadiro{\u{g}}lu and
		S{\"o}nmez}{Abdulkadiro{\u{g}}lu and
		S{\"o}nmez}{1999}]{abdulkadirouglu1999house}
	---\hspace{-.1pt}---\hspace{-.1pt}--- (1999): \enquote{House allocation with
		existing tenants,} \emph{Journal of Economic Theory}, 88, 233--260.
	
	\bibitem[\protect\citeauthoryear{Agarwal, Ashlagi, Azevedo, Featherstone, and
		Karaduman}{Agarwal et~al.}{2019}]{agarwal2019market}
	\textsc{Agarwal, N., I.~Ashlagi, E.~Azevedo, C.~R. Featherstone, and
		{\"O}.~Karaduman} (2019): \enquote{Market failure in kidney exchange,}
	\emph{American Economic Review}, 109, 4026--70.
	
	\bibitem[\protect\citeauthoryear{Alcalde-Unzu and Molis}{Alcalde-Unzu and
		Molis}{2011}]{alcalde2011exchange}
	\textsc{Alcalde-Unzu, J. and E.~Molis} (2011): \enquote{Exchange of indivisible
		goods and indifferences: The top trading absorbing sets mechanisms,}
	\emph{Games and Economic Behavior}, 73, 1--16.
	
	\bibitem[\protect\citeauthoryear{Ashlagi and Roth}{Ashlagi and
		Roth}{2014}]{ashlagi2014free}
	\textsc{Ashlagi, I. and A.~E. Roth} (2014): \enquote{Free riding and
		participation in large scale, multi-hospital kidney exchange,}
	\emph{Theoretical Economics}, 9, 817--863.
	
	\bibitem[\protect\citeauthoryear{Aslan and Lain{\'e}}{Aslan and
		Lain{\'e}}{2020}]{aslan2020competitive}
	\textsc{Aslan, F. and J.~Lain{\'e}} (2020): \enquote{Competitive equilibria in
		Shapley--Scarf markets with couples,} \emph{Journal of Mathematical
		Economics}, 89, 66--78.
	
	\bibitem[\protect\citeauthoryear{Aue, Klein, and Ortega}{Aue
		et~al.}{2020}]{aue2020}
	\textsc{Aue, R., T.~Klein, and J.~Ortega} (2020): \enquote{What happens when
		several separate and unequal school districts merge?} \emph{ArXiv}.
	
	\bibitem[\protect\citeauthoryear{Aziz and De~Keijzer}{Aziz and
		De~Keijzer}{2012}]{aziz2012housing}
	\textsc{Aziz, H. and B.~De~Keijzer} (2012): \enquote{Housing markets with
		indifferences: A tale of two mechanisms,} in \emph{Twenty-Sixth AAAI
		Conference on Artificial Intelligence}.
	
	\bibitem[\protect\citeauthoryear{Chambers and Hayashi}{Chambers and
		Hayashi}{2017}]{chambers2017gains}
	\textsc{Chambers, C.~P. and T.~Hayashi} (2017): \enquote{Gains from trade,}
	\emph{International Economic Review}, 58, 923--942.
	
	\bibitem[\protect\citeauthoryear{Chambers and Hayashi}{Chambers and
		Hayashi}{2020}]{chambers2019can}
	---\hspace{-.1pt}---\hspace{-.1pt}--- (2020): \enquote{Can everyone benefit
		from economic integration?} \emph{Journal of Public Economic Theory}.
	
	\bibitem[\protect\citeauthoryear{Che and Tercieux}{Che and
		Tercieux}{2019}]{che2019efficiency}
	\textsc{Che, Y.-K. and O.~Tercieux} (2019): \enquote{Efficiency and stability
		in large matching markets,} \emph{Journal of Political Economy}, 127,
	2301--2342.
	
	\bibitem[\protect\citeauthoryear{Chen and S{\"o}nmez}{Chen and
		S{\"o}nmez}{2002}]{chen2002improving}
	\textsc{Chen, Y. and T.~S{\"o}nmez} (2002): \enquote{Improving efficiency of
		on-campus housing: An experimental study,} \emph{American Economic Review},
	92, 1669--1686.
	
	\bibitem[\protect\citeauthoryear{Do{\u{g}}an and Yenmez}{Do{\u{g}}an and
		Yenmez}{2019}]{dougan2019unified}
	\textsc{Do{\u{g}}an, B. and M.~B. Yenmez} (2019): \enquote{Unified versus
		divided enrollment in school choice: Improving student welfare in Chicago,}
	\emph{Games and Economic Behavior}, 118, 366--373.
	
	\bibitem[\protect\citeauthoryear{Ekmekci and Yenmez}{Ekmekci and
		Yenmez}{2019}]{ekmekci2019common}
	\textsc{Ekmekci, M. and M.~B. Yenmez} (2019): \enquote{Common enrollment in
		school choice,} \emph{Theoretical Economics}, 14, 1237--1270.
	
	\bibitem[\protect\citeauthoryear{Frieze and Pittel}{Frieze and
		Pittel}{1995}]{frieze1995probabilistic}
	\textsc{Frieze, A. and B.~G. Pittel} (1995): \enquote{Probabilistic analysis of
		an algorithm in the theory of markets in indivisible goods,} \emph{The Annals
		of Applied Probability}, 768--808.
	
	\bibitem[\protect\citeauthoryear{Gale and Shapley}{Gale and
		Shapley}{1962}]{gale1962college}
	\textsc{Gale, D. and L.~S. Shapley} (1962): \enquote{College admissions and the
		stability of marriage,} \emph{The American Mathematical Monthly}, 69, 9--15.
	
	\bibitem[\protect\citeauthoryear{G{\"a}rdenfors}{G{\"a}rdenfors}{1975}]{gardenfors1975match}
	\textsc{G{\"a}rdenfors, P.} (1975): \enquote{Match making: assignments based on
		bilateral preferences,} \emph{Behavioral Science}, 20, 166--173.
	
	\bibitem[\protect\citeauthoryear{Gersbach and Haller}{Gersbach and
		Haller}{2021}]{gersbach2019gainers}
	\textsc{Gersbach, H. and H.~Haller} (2021): \enquote{Gainers and losers from
		market integration,} \emph{Mathematical Social Sciences}.
	
	\bibitem[\protect\citeauthoryear{Gusfield and Irving}{Gusfield and
		Irving}{1989}]{gusfield1989stable}
	\textsc{Gusfield, D. and R.~W. Irving} (1989): \emph{The stable marriage
		problem: structure and algorithms}, MIT press.
	
	\bibitem[\protect\citeauthoryear{Jaramillo and Manjunath}{Jaramillo and
		Manjunath}{2012}]{jaramillo2012difference}
	\textsc{Jaramillo, P. and V.~Manjunath} (2012): \enquote{The difference
		indifference makes in strategy-proof allocation of objects,} \emph{Journal of
		Economic Theory}, 147, 1913--1946.
	
	\bibitem[\protect\citeauthoryear{Knuth}{Knuth}{1996}]{knuth1996exact}
	\textsc{Knuth, D.~E.} (1996): \enquote{An exact analysis of stable allocation,}
	\emph{Journal of Algorithms}, 20, 431--442.
	
	\bibitem[\protect\citeauthoryear{Ma}{Ma}{1994}]{ma1994strategy}
	\textsc{Ma, J.} (1994): \enquote{Strategy-proofness and the strict core in a
		market with indivisibilities,} \emph{International Journal of Game Theory},
	23, 75--83.
	
	\bibitem[\protect\citeauthoryear{Manjunath and Turhan}{Manjunath and
		Turhan}{2016}]{manjunath2016}
	\textsc{Manjunath, V. and B.~Turhan} (2016): \enquote{Two school systems, one
		district: What to do when a unified admissions process is impossible,}
	\emph{Games and Economic Behavior}, 95, 25--40.
	
	\bibitem[\protect\citeauthoryear{Ortega}{Ortega}{2018}]{ortega2018social}
	\textsc{Ortega, J.} (2018): \enquote{Social integration in two-sided matching
		markets,} \emph{Journal of Mathematical Economics}, 78, 119--126.
	
	\bibitem[\protect\citeauthoryear{Ortega}{Ortega}{2019}]{ortega2019losses}
	---\hspace{-.1pt}---\hspace{-.1pt}--- (2019): \enquote{The losses from
		integration in matching markets can be large,} \emph{Economics Letters}, 174,
	48--51.
	
	\bibitem[\protect\citeauthoryear{Powell, Meltzer, Martin, Stone, Liu, Flanagan,
		Muir, and Tually}{Powell et~al.}{2019}]{powell2019construction}
	\textsc{Powell, A., A.~Meltzer, C.~Martin, W.~Stone, E.~Liu, K.~Flanagan,
		K.~Muir, and S.~Tually} (2019): \enquote{The construction of social housing
		pathways across Australia,} \emph{Australian Housing and Urban Research
		Institute}.
	
	\bibitem[\protect\citeauthoryear{Quint and Wako}{Quint and
		Wako}{2004}]{quint2004houseswapping}
	\textsc{Quint, T. and J.~Wako} (2004): \enquote{On houseswapping, the strict
		core, segmentation, and linear programming,} \emph{Mathematics of Operations
		Research}, 29, 861--877.
	
	\bibitem[\protect\citeauthoryear{Roth and Postlewaite}{Roth and
		Postlewaite}{1977}]{roth1977}
	\textsc{Roth, A.~E. and A.~Postlewaite} (1977): \enquote{Weak versus strong
		domination in a market with indivisible goods,} \emph{Journal of Mathematical
		Economics}, 4, 131 -- 137.
	
	\bibitem[\protect\citeauthoryear{Roth, S{\"o}nmez, and {\"U}nver}{Roth
		et~al.}{2004}]{roth2004kidney}
	\textsc{Roth, A.~E., T.~S{\"o}nmez, and M.~U. {\"U}nver} (2004):
	\enquote{Kidney exchange,} \emph{Quarterly Journal of Economics}, 119,
	457--488.
	
	\bibitem[\protect\citeauthoryear{Saban and Sethuraman}{Saban and
		Sethuraman}{2013}]{saban2013house}
	\textsc{Saban, D. and J.~Sethuraman} (2013): \enquote{House allocation with
		indifferences: a generalization and a unified view,} in \emph{Proceedings of
		the ACM Conference on Electronic Commerce}, 803--820.
	
	\bibitem[\protect\citeauthoryear{Shapley and Scarf}{Shapley and
		Scarf}{1974}]{shapley1974}
	\textsc{Shapley, L. and H.~Scarf} (1974): \enquote{On cores and
		indivisibility,} \emph{Journal of Mathematical Economics}, 1, 23--37.
	
	\bibitem[\protect\citeauthoryear{Sprumont}{Sprumont}{1990}]{sprumont1990population}
	\textsc{Sprumont, Y.} (1990): \enquote{Population monotonic allocation schemes
		for cooperative games with transferable utility,} \emph{Games and Economic
		Behavior}, 2, 378--394.
	
	\bibitem[\protect\citeauthoryear{Toulis and Parkes}{Toulis and
		Parkes}{2015}]{toulis2015}
	\textsc{Toulis, P. and D.~Parkes} (2015): \enquote{Design and analysis of
		multi-hospital kidney exchange mechanisms using random graphs,} \emph{Games
		and Economic Behavior}, 91, 360--382.
	
	\bibitem[\protect\citeauthoryear{Troyan}{Troyan}{2019}]{troyan2019obviously}
	\textsc{Troyan, P.} (2019): \enquote{Obviously strategy-proof implementation of
		top trading cycles,} \emph{International Economic Review}, 60, 1249--1261.
	
	\bibitem[\protect\citeauthoryear{Turhan}{Turhan}{2019}]{turhan2019}
	\textsc{Turhan, B.} (2019): \enquote{Welfare and incentives in partitioned
		school choice markets,} \emph{Games and Economic Behavior}, 113, 199--208.
	
\end{thebibliography}

\end{document}